\documentclass[10pt,journal]{IEEEtran}
\usepackage{amsmath,amssymb,amsthm}
\usepackage{graphicx}
\usepackage{booktabs}
\usepackage[hidelinks]{hyperref}
\usepackage{algorithm}
\usepackage{algpseudocode}

\newtheorem{theorem}{Theorem}
\newtheorem{proposition}{Proposition}
\newtheorem{definition}{Definition}

\newtheorem{lemma}{Lemma}
\theoremstyle{remark}

\newcommand{\RG}{\mathbf{R}_{G}}
\newcommand{\Tr}{\operatorname{Tr}}
\newcommand{\Deff}{D}
\newcommand{\Beff}{B_{\mathrm{eff}}}

\begin{document}

\title{Kernel Density Estimation by Spectral Decomposition: Data-Driven Tapering and Superposition
}

\author{Mitchell~A.~Thornton%
\thanks{The author is with the Darwin Deason Institute for Cyber Security and the
Department of Electrical and Computer Engineering, Southern Methodist University,
Dallas, TX 75275 USA
(e-mail: mitch@smu.edu).}}

\maketitle

\begin{abstract}
Kernel density estimation is among the most widely used tools in nonparametric statistics, and its
accuracy depends largely on one choice, the smoothing bandwidth. This paper treats bandwidth selection
and density estimation in the characteristic-function domain, where the cyclic group-averaged
covariance of the binned data has the squared empirical characteristic function as its spectrum: the
true characteristic function sits over a sampling-noise floor of $1/n$, and the bandwidth is the
spectral cutoff where the two meet. From this reading follow several methods. An automatic bandwidth
selector strips the floor and minimizes a frequency-domain error criterion, matching the rule of thumb
on smooth densities and approaching the best fixed bandwidth on multimodal ones, where the rule of thumb
over-smooths. An adaptive estimator generalizes the fixed kernel to the per-frequency optimal Wiener
taper, matching or surpassing the best fixed-bandwidth kernel on most standard densities, including
sharply peaked and comb-like cases on which fixed-bandwidth methods fail; deconvolution under known
measurement error follows in the same domain by dividing out the noise characteristic function. Because
the Wiener estimator resolves sharp structure but does not fit smooth bases as economically as a
mixture, a Gaussian mixture is combined with it two ways, a piecewise partition and a superposition of a
smooth base and a band-limited residual, of which the superposition is made the default. A data-driven
floor read from the spectrum replaces the assumed $1/n$ floor and stays robust on heaped and rounded
data, where aliasing lifts the true floor. On the Marron-Wand benchmark scored by exact integrated
squared error, the advantage emerges with sample size through a bias-variance tradeoff: the spectral
estimators carry low bias but pay in variance, so a corrected implementation of the Botev plug-in
bandwidth leads at $n=100$ while
the Wiener filter and the superposition take the top two ranks at $n=5000$. The methods are validated on
six real datasets, CRSP returns, NHANES self-reports, CMS dimuon and SDSS redshift spectra, a verified
random-beacon stream, and UNSW-NB15 traffic, and on a synthetic-data quality-checking use-case. All
claims are validated numerically and the experiments are reproducible.
\end{abstract}

\begin{IEEEkeywords}
Kernel density estimation, bandwidth selection, empirical characteristic function,
group-averaged covariance, Wiener filter, density decomposition, deconvolution, heaped data.
\end{IEEEkeywords}

\section{Introduction}
\IEEEPARstart{K}{ernel} density estimation (KDE) turns a sample into a smooth estimate of
its probability density and is one of the most heavily used methods in applied statistics
and machine learning \cite{silverman1986,scott2015,wandjones1995}. It underlies anomaly and
novelty detection, where low estimated density flags outliers, and in cybersecurity the same
principle drives behavioral baselining and zero-trust monitoring, where deviations from a
learned density of normal activity raise alerts. It supports generative and probabilistic
machine learning, where densities are sampled, compared, and used to score representations,
including emerging uses on latent and embedding spaces for novelty detection, dataset-quality
assessment, and large-language-model hallucination detection, settings in which the geometry
of the embedding space may itself carry symmetry that an algebraic estimator can exploit. In
quantitative finance and risk it builds the return and loss densities behind value-at-risk,
expected shortfall, and tail-risk analysis, where the distributions are routinely skewed,
heavy-tailed, and multimodal, a regime that motivates the attention this paper gives to
asymmetric and sharply peaked targets. In the life sciences it estimates gene-expression
distributions, cell-population structure, and protein-abundance profiles; in geographic
information systems it smooths incidence surfaces for crime analysis, traffic-accident
mapping, and disease-outbreak surveillance; and in autonomous systems and robotics it models
obstacle and pedestrian densities, vehicle-movement patterns, and the likelihood fields used
in localization. In signal processing it estimates spectrum-occupancy, emitter, interference,
and radar-target distributions; in computer vision it models color, texture, and object-location
distributions for tracking and for intensity-based image segmentation without assuming a
Gaussian mixture; and in the sciences it smooths source and clustering densities in astronomy,
firing-rate densities in neuroscience, and incidence surfaces in epidemiology. Across all of
these the estimate is only as good as its smoothing.

That smoothing is set by the bandwidth, and some practitioners consider its selection to be the most challenging issue in kernel density estimation \cite{wandjones1995,sheatherjones1991}. Too
small a bandwidth produces a spiky, high-variance estimate; too large a bandwidth blurs real
structure into a featureless mound. The standard rules of thumb assume a nearly Gaussian
target and over-smooth anything multimodal, while the more elaborate data-driven selectors
trade that bias for instability or for a delicate optimization.

This paper approaches the problem through algebraic diversity \cite{thornton2026algebraicdiversityprinciplesgrouptheoretic}, whose
central object is the group-averaged covariance, a projection of the sample covariance onto
the commutant of a group representation. Applied to the binned sample, that projection has
a familiar spectrum: its eigenvalues are the squared empirical characteristic function. A
kernel estimate is a low-pass filter on that same characteristic function, so the bandwidth
is a spectral cutoff, and selecting it is the spectral-support problem of finding where the
characteristic function descends into its $1/n$ sampling-noise floor. Two enhancements
follow. The first is an automatic bandwidth selector built from the noise-floor strip of the
algebraic framework. The second goes beyond bandwidth selection entirely: replacing the
fixed kernel with the per-frequency optimal taper yields an adaptive estimator that
outperforms the best fixed-bandwidth kernel.

The contributions are the following. The KDE bandwidth is identified with the spectral support of the
binned data, recasting selection as the problem of locating where the empirical characteristic function
meets its sampling-noise floor, and from this reading follow the bandwidth selector and the adaptive
Wiener estimator. The noise floor is itself a contribution: the algebraic residue, the part of the
spectrum that the matched group leaves unexplained, furnishes a data-driven maximum-entropy floor that
supersedes the fixed $1/n$ floor and is far more robust on heaped or rounded data, where aliasing lifts
the true floor above $1/n$. The Wiener estimator extends in the same domain to deconvolution under known
measurement error, dividing out the noise characteristic function with the residual-meets-floor rule
setting the cutoff. Because the Wiener filter resolves sharp structure but does not fit a smooth base as
economically as a mixture, a Gaussian mixture is combined with it two ways, a piecewise partition and a
superposition of a smooth base and a band-limited residual; the superposition decomposes densities
whose sharp and smooth parts overlap, which no single bandwidth resolves, and is made the default. Both
delivered densities, the band-limited adaptive estimate and the Gaussian mixture, are differentiable
almost everywhere, so model builders and digital-twin constructions can optimize against the recovered
density by gradient descent even when it fits sharp, spiky structure, and a synthetic-data
quality-checking use-case shows the method recovering contrived target densities exactly. On the
standard Marron-Wand benchmark of fifteen densities, scored by exact integrated squared error against
the known truth at three sample sizes, the advantage emerges with sample size through a bias-variance
tradeoff: the spectral estimators carry low bias but pay in variance, so at $n=100$ a corrected
implementation of the Botev plug-in
bandwidth leads and AD-Wiener alone is the robust spectral choice, while at $n=5000$ the Wiener filter
and the superposition take the top two ranks among seven, ahead of every classical baseline including
the corrected Botev diffusion bandwidth, the superposition surpassing a Gaussian mixture even where the
benchmark hands the mixture the correct model class. The reading is validated on six real datasets from distinct
domains: across thirteen CRSP series the spectral and mixture estimators recover the tail risk that a
Gaussian fit understates by more than two percent of a daily return; on NHANES survey self-reports the
data-driven floor stays robust where rounding defeats the fixed floor; on the CMS dimuon and SDSS
galaxy-redshift spectra, where sharp features carry probability mass on a smooth background, the
adaptive estimator improves held-out likelihood over the ordinary kernel estimate; on a NIST
randomness-beacon stream, verified uncorrupted by the statistical test suite, AD-Wiener recovers the
uniform target almost exactly where a Gaussian mixture cannot; and on UNSW-NB15 network traffic the
superposition gives the lowest held-out likelihood across flow features. All claims are validated
numerically and the experiments are reproducible.

\section{Background and Related Work}\label{sec:bg}
\subsection{Kernel density estimation}
Given samples $x_1,\dots,x_n$, the kernel density estimate with kernel $K$ and bandwidth
$h$ is $\hat f_h(x)=\tfrac{1}{n}\sum_{j} K_h(x-x_j)$, $K_h(u)=h^{-1}K(u/h)$. Its mean
integrated squared error decomposes into a squared bias that grows with $h$ and a variance
that falls with $h$, so a single scalar trades the two off \cite{wandjones1995}. The choice
of kernel matters little; the choice of $h$ is decisive.

\subsection{Bandwidth selection}
The simplest selectors are reference rules that set $h$ from the sample spread under a
Gaussian assumption, such as Silverman's and Scott's rules \cite{silverman1986,scott2015}.
These are fast but over-smooth structured densities. Data-driven selectors improve on them:
least-squares and likelihood cross-validation; the plug-in method of Sheather and Jones,
long the reference standard \cite{sheatherjones1991}; the diffusion estimator of Botev and
colleagues, an FFT-based selector that also yields an improved Sheather-Jones bandwidth
\cite{botev2010}; and the empirical-characteristic-function selectors of Chiu
\cite{chiu1991}, which truncate the characteristic function at the frequency where it meets
its sampling floor. Bandwidth selection remains the aspect of kernel density estimation that
most determines accuracy and the one most studied.

\subsection{Frequency-domain and optimal-kernel estimators}
A line of work central to what follows treats density estimation directly in the frequency
domain. Watson and Leadbetter derived the linear filter on the empirical characteristic
function that minimizes the mean integrated squared error \cite{watsonleadbetter1963}, a
classical result this paper builds on rather than claims. Realizations differ in how the
unknown characteristic function is estimated: Chiu's selectors \cite{chiu1991} truncate at the
sampling floor and retain a kernel taper, while the self-consistent estimator of Bernacchia
and Pigolotti \cite{bernacchia2011} solves for the optimal filter as a fixed point of the
empirical characteristic function, admitting frequencies whose squared amplitude clears a
fixed threshold of approximately $4/n$. Linear frequency-domain estimates need not be
densities; the correction of Glad, Hjort, and Ushakov \cite{glad2003} repairs this without
increasing the integrated squared error. The contribution here to this line is the one-shot
floor strip, the data-driven noise floor read from the residual spectrum, which replaces the
fixed $1/n$ or $4/n$ thresholds and survives discretized data
(Section~\ref{sec:select}), and the superposition estimator of
Section~\ref{sec:superpose}.

\subsection{Other enhancements}
Beyond the global bandwidth, several lines of work enhance the estimator itself. Variable
and adaptive bandwidth methods let the smoothing vary with local density, narrowing in dense
regions and widening in sparse tails \cite{abramson1982,breiman1977}. Boundary-corrected
estimators remove the bias that ordinary KDE suffers near the edge of a bounded support, by
reflection or by boundary kernels \cite{schuster1985,jones1993}. Transformation methods map
the data to a scale where a single bandwidth suffices and then map back \cite{wmr1991}.
Deconvolution estimators recover a density from noisy observations by dividing out the noise
characteristic function \cite{stefanski1990}. The present work adds to this literature a
characteristic-function reading that yields both a bandwidth selector and, going beyond a fixed
kernel, an optimal adaptive estimator, and it extends in the same domain to deconvolution under known
measurement error (Section~\ref{sec:deconv}).

\section{The Group-Averaged Spectrum and the Empirical Characteristic Function}\label{sec:ecf}
Let the samples be binned into a histogram $p\in\mathbb{R}^{M}$ with $\sum_m p_m=1$ on a
uniform grid of spacing $\Delta x$, viewed as a signal on the cyclic index set. Let $S$ be
the unitary cyclic shift and $G=\mathbb{Z}_M$ the group it generates. The single-observation
group-averaged covariance \cite{thornton2026algebraicdiversityprinciplesgrouptheoretic} is
$\RG=\tfrac{1}{M}\sum_{g\in G}(S^{g}p)(S^{g}p)^{H}$, the Reynolds projection of $pp^{H}$
onto the commutant of the cyclic representation. Write $F$ for the unitary DFT and
$\hat\varphi(t_k)=(Fp)_k$ for the empirical characteristic function at
$t_k=2\pi k/(M\Delta x)$.

\begin{theorem}[Spectrum identity]\label{thm:identity}
$\RG$ is circulant, is diagonalized by $F$, and its eigenvalues are the squared empirical characteristic function (ECF):
$(F\RG F^{H})_{kl}=\delta_{kl}\,\lvert\hat\varphi(t_k)\rvert^{2}$.
\end{theorem}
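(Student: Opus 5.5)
The plan is to conjugate $\RG$ by the unitary DFT $F$ and use the fact that $F$ diagonalizes the cyclic shift, so that the group average becomes a character sum that kills every off-diagonal entry. The circulant structure follows first from the group structure alone. Since $S^{M}=I$, left-multiplying by $S$ and right-multiplying by $S^{H}=S^{-1}$ gives
\[
S\RG S^{H}=\tfrac{1}{M}\sum_{g\in G}(S^{g+1}p)(S^{g+1}p)^{H}=\RG ,
\]
because $g\mapsto g+1$ permutes $\mathbb{Z}_M$. Hence $\RG$ commutes with $S$. A matrix commuting with the cyclic shift has constant diagonals modulo $M$, which is exactly the definition of circulant; this is the Reynolds-projection statement specialized to the cyclic group.

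Next I will use the standard fact that $FSF^{H}=\Omega:=\operatorname{diag}(\omega^{k})$, with $\omega=e^{\mp 2\pi i/M}$ and the sign fixed by the paper's DFT convention. Then $F S^{g} p=\Omega^{g}Fp=\Omega^{g}\hat\varphi$, where $\hat\varphi$ denotes the vector $(\hat\varphi(t_k))_k$. Unitarity of $F$ gives
\[
F\RG F^{H}=\tfrac{1}{M}\sum_{g=0}^{M-1}\Omega^{g}\hat\varphi\hat\varphi^{H}\Omega^{-g}.
\]
Its $(k,l)$ entry is therefore
\[
\hat\varphi(t_k)\overline{\hat\varphi(t_l)}\cdot\tfrac{1}{M}\sum_{g}\omega^{g(k-l)}.
\]
By orthogonality of the characters of $\mathbb{Z}_M$, the geometric sum equals $\delta_{kl}$. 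This yields $(F\RG F^{H})_{kl}=\delta_{kl}\lvert\hat\varphi(t_k)\rvert^{2}$, which shows at once that $F$ diagonalizes $\RG$ and identifies the eigenvalues. The diagonalization also gives a second, independent proof that $\RG$ is circulant.

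I expect the main obstacle to be bookkeeping rather than substance. The identification $\hat\varphi(t_k)=(Fp)_k$ involves a sign and normalization convention. With the unitary $F$, $(Fp)_k=M^{-1/2}\sum_m p_m e^{-2\pi i km/M}$, whereas the ECF evaluated on the grid, $\tfrac1n\sum_j e^{it_kx_j}$ with $x_j$ replaced by bin centres $m\Delta x$, carries no $M^{-1/2}$ and the opposite sign in the exponent. I will adopt the paper's definition $\hat\varphi(t_k):=(Fp)_k$ literally. I will then note that conjugation and the real scale factor affect neither the diagonal structure nor the modulus, so the identity holds in either convention up to the constant $M$. I will also remark that $p$ is real, so $p^{H}=p^{T}$ and $\lvert\hat\varphi(t_{-k})\rvert=\lvert\hat\varphi(t_k)\rvert$; this consistency check is not needed for the proof.
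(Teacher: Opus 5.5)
Your proposal is correct and follows essentially the same route as the paper: you conjugate by $F$, use the shift theorem to turn $S^{g}$ into a diagonal phase matrix, and let the character orthogonality of $\mathbb{Z}_M$ kill every off-diagonal entry, leaving $\delta_{kl}\lvert\hat\varphi(t_k)\rvert^{2}$. You add two things the paper does not: a direct argument that $\RG$ commutes with $S$ and is therefore circulant (the paper instead reads circulance off diagonalization by $F$), and a sound remark that with the unitary $F$ the eigenvalues equal the bin-centre ECF squared divided by $M$, a factor the paper absorbs by defining $\hat\varphi(t_k)=(Fp)_k$.
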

\begin{proof}
The shift theorem gives $FS^{g}F^{H}=D^{g}$ with
$D=\operatorname{diag}(e^{-\jmath 2\pi k/M})$, so
$F\RG F^{H}=\tfrac{1}{M}\sum_{g}D^{g}(Fp)(Fp)^{H}D^{-g}$. The $(k,l)$ entry of $(Fp)(Fp)^H$
is $(Fp)_k(Fp)_l^{*}$ and conjugation by $D^g$ multiplies it by $e^{-\jmath2\pi(k-l)g/M}$;
character orthogonality $\tfrac{1}{M}\sum_g e^{-\jmath2\pi(k-l)g/M}=\delta_{kl}$ leaves
$\delta_{kl}\lvert(Fp)_k\rvert^2$. Diagonal in the Fourier basis means circulant
\cite{gray2006}.
\end{proof}

The binned data's algebraic spectrum is therefore the squared ECF, the same kind of object
the bandwidth-estimation framework treats: a true characteristic function near the origin
over a sampling-noise floor.

\begin{definition}[Effective dimension]\label{def:D}
With eigenvalues $\lambda_k=\lvert\hat\varphi(t_k)\rvert^{2}$, the participation ratio is
$\Deff=(\sum_k\lambda_k)^2/\sum_k\lambda_k^2=(\Tr\RG)^2/\lVert\RG\rVert_F^2$, the effective
number of occupied spectral modes \cite{hill1973}.
\end{definition}

\begin{proposition}[Noise floor]\label{prop:floor}
$\mathbb{E}\lvert\hat\varphi(t)\rvert^{2}=\lvert\varphi(t)\rvert^{2}
+(1-\lvert\varphi(t)\rvert^{2})/n\to 1/n$ as $\varphi(t)\to 0$.
\end{proposition}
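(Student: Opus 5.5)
The plan is to work with the unbinned empirical characteristic function $\hat\varphi(t)=\tfrac{1}{n}\sum_{j=1}^{n}e^{\jmath t x_j}$, where the $x_j$ are i.i.d.\ with density $f$ and characteristic function $\varphi(t)=\mathbb{E}\,e^{\jmath t X}$. Squared magnitude is unaffected by sign or normalization conventions in $F$, so these do not matter here. The identity is then a second-moment computation: expand $\lvert\hat\varphi(t)\rvert^{2}=\hat\varphi(t)\hat\varphi(t)^{*}$ as a double sum and take expectations term by term.

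Concretely, write
\[
\lvert\hat\varphi(t)\rvert^{2}=\frac{1}{n^{2}}\sum_{j,l}e^{\jmath t(x_j-x_l)}
\]
and split the sum into diagonal and off-diagonal terms.
\begin{itemize}
\item The $n$ diagonal terms $j=l$ each equal $1$ exactly, so they contribute $n/n^{2}=1/n$.
\item For the $n(n-1)$ off-diagonal terms, independence gives $\mathbb{E}\,e^{\jmath t x_j}e^{-\jmath t x_l}=\varphi(t)\varphi(t)^{*}=\lvert\varphi(t)\rvert^{2}$. They contribute $\tfrac{n-1}{n}\lvert\varphi(t)\rvert^{2}$.
\end{itemize}
Adding the two parts gives
\[
\mathbb{E}\lvert\hat\varphi(t)\rvert^{2}=\tfrac{1}{n}+\bigl(1-\tfrac{1}{n}\bigr)\lvert\varphi(t)\rvert^{2}=\lvert\varphi(t)\rvert^{2}+\bigl(1-\lvert\varphi(t)\rvert^{2}\bigr)/n,
\]
which is the stated identity. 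Equivalently, this is $\lvert\mathbb{E}\hat\varphi\rvert^{2}+\operatorname{Var}\hat\varphi$, with $\operatorname{Var}\hat\varphi=(1-\lvert\varphi\rvert^{2})/n$ because $\lvert e^{\jmath tX}\rvert=1$. The limit then follows by continuity of the right-hand side as a function of $\lvert\varphi(t)\rvert^{2}$: letting $\varphi(t)\to 0$ (for instance $\lvert t\rvert\to\infty$ for absolutely continuous $f$, by Riemann--Lebesgue) sends it to $1/n$.

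The only real obstacle is connecting this to the binned object of Theorem~\ref{thm:identity}, where $\hat\varphi(t_k)=(Fp)_k$. If bins are represented by their centers, $(Fp)_k$ is exactly the ECF of the discretized sample $\tilde x_j$ evaluated at $t_k$, up to a unimodular phase and a constant normalization from the unitary DFT. I would absorb that constant into the definition of $\hat\varphi$ so the diagonal terms still equal $1/n$.

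The same computation then goes through verbatim with $\varphi$ replaced by the characteristic function of the discretized variable. That function agrees with $\varphi$ up to aliasing and the bin-width factor. I would state this explicitly and note that the proposition's $1/n$ floor is exact for the discretized law, which is precisely why heaping and rounding, through aliasing, can lift the apparent floor.
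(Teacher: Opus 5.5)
Your proof is correct and is the paper's argument unpacked: the paper writes $\mathbb{E}\lvert\hat\varphi\rvert^2=\lvert\mathbb{E}\hat\varphi\rvert^2+\operatorname{Var}\hat\varphi$ with $\operatorname{Var}\hat\varphi=(1-\lvert\varphi\rvert^2)/n$, and your diagonal/off-diagonal split of the double sum is exactly the computation behind that variance. Your remarks on the binned object go beyond the paper's proof and are sound, with one fix: the unitary DFT's $1/\sqrt{M}$ factor does change $\lvert(Fp)_k\rvert^2$ (it gives $\lvert(Fp)_0\rvert^2=1/M$), so drop your opening claim that normalization is irrelevant and keep the rescaling $\hat\varphi(t_k)=\sqrt{M}\,(Fp)_k$ you propose later.
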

\begin{proof}
$\mathbb{E}\hat\varphi(t)=\varphi(t)$ and
$\operatorname{Var}\hat\varphi(t)=\tfrac{1}{n}(1-\lvert\varphi(t)\rvert^2)$, so
$\mathbb{E}\lvert\hat\varphi\rvert^2=\lvert\varphi\rvert^2+\tfrac1n(1-\lvert\varphi\rvert^2)$.
\end{proof}

\section{Spectral Bandwidth Selection by the Algebraic Residue}\label{sec:select}
A Gaussian kernel estimate has Fourier transform $\hat\varphi(t)\psi(ht)$ with
$\psi(s)=e^{-s^2/2}$, so the bandwidth $h$ is a low-pass cutoff on the ECF, and choosing it
is choosing where to stop trusting $\hat\varphi$. The algebraic framework supplies the rule:
strip the known floor, $\hat S_k=\max(\lvert\hat\varphi(t_k)\rvert^2-1/n,0)$, truncated
beyond the first frequency at which the smoothed squared ECF drops into the floor (this
removes the rectification residual that otherwise biases the high-frequency tail), and select
\begin{equation}\label{eq:Gh}
h_{\mathrm{AD}}=\arg\min_{h}\ \sum_k\lvert\hat\varphi(t_k)\rvert^2\psi(ht_k)^2
-2\sum_k\hat S_k\,\psi(ht_k),
\end{equation}
which estimates the integrated squared error up to a constant. This is the
empirical-characteristic-function selector of \cite{chiu1991} read through the floor strip
and effective-support cutoff of the present framework.

\begin{proposition}[Inflation and the strip]\label{prop:inflate}
For an expected spectrum with $K$ in-band bins at level $a=S+b$ and $M-K$ out-of-band bins
at the floor $b$, the participation ratio of the expected spectrum is
$\bar\Deff=(Ka+(M-K)b)^2/(Ka^2+(M-K)b^2)$, which tends to $K$ as $b/a\to0$ and to $M$ as
$a\to b$. Subtracting a consistent estimate of $b$ and clipping restores $\bar\Deff\to K$.
\end{proposition}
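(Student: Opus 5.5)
The participation ratio of Definition~\ref{def:D} depends only on the eigenvalue multiset, so I will work directly with the expected spectrum given in the statement: $K$ values equal to $a$ and $M-K$ values equal to $b$. The proof has three parts: compute $\bar\Deff$, take the two limits, and handle the strip.

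\textbf{Formula.} We have $\sum_k\lambda_k=Ka+(M-K)b$ and $\sum_k\lambda_k^2=Ka^2+(M-K)b^2$. Substituting these into $\Deff=(\sum\lambda)^2/\sum\lambda^2$ gives the displayed $\bar\Deff$ at once.

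\textbf{Limits.} Set $r=b/a\in[0,1]$, which is legitimate because $a=S+b\ge b$. Dividing numerator and denominator by $a^2$ gives
\[
\bar\Deff(r)=\frac{(K+(M-K)r)^2}{K+(M-K)r^2},
\]
a continuous rational function of $r$ on $[0,1]$ whose denominator is at least $K>0$.
\begin{itemize}
\item As $r\to0$, i.e.\ $b/a\to0$, it tends to $K^2/K=K$.
\item As $a\to b$, i.e.\ $r\to1$, it tends to $M^2/M=M$.
\end{itemize}
I would also note that Cauchy--Schwarz gives $\bar\Deff\le M$, with equality only for a flat spectrum. This reading connects to Proposition~\ref{prop:floor}: the out-of-band floor is $b\approx1/n$, so the floor alone inflates the dimension toward $M$ whenever $(M-K)b$ is comparable to $KS$.

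\textbf{Strip.} Let $\hat b$ be an estimate of $b$ with $\hat b\to b$ in probability, taken consistent as the statement assumes. Define the stripped spectrum $\tilde\lambda_k=\max(\lambda_k-\hat b,0)$.
\begin{itemize}
\item Out-of-band bins become $\max(b-\hat b,0)$, which tends to $0$.
\item In-band bins become $a-\hat b\to S$.
\end{itemize}
The participation ratio of the stripped spectrum therefore has the form $\bar\Deff$ with $(a,b)$ replaced by $(a-\hat b,\ (b-\hat b)_+)$. Its ratio $r'=(b-\hat b)_+/(a-\hat b)$ tends to $0/S=0$ when $S>0$, and continuity of $\bar\Deff(r)$ at $r=0$ gives $\to K$ by the continuous mapping theorem.

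\textbf{Main obstacle.} The delicate point is uniformity in $M$. If $M-K$ grows, e.g.\ as the grid is refined, then the residual $(M-K)(b-\hat b)_+$ must be small relative to $KS$, so consistency alone is not enough. I would impose the rate $(M-K)\,|\hat b-b|=o(KS)$ and $(M-K)\,|\hat b-b|^2=o(KS^2)$. Under this rate, the numerator and denominator of the stripped participation ratio converge to $(KS)^2$ and $KS^2$, so it tends to $K$. When $\hat b$ overestimates $b$, the clip makes the out-of-band bins exactly zero, which only helps. I would also note that this is the role of the truncation beyond the first floor-crossing in Section~\ref{sec:select}: it zeros the rectification residual explicitly. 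For fixed $M$ no rate is needed.
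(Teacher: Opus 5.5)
Your proof is correct and follows the same route as the paper's: substitute the two-level spectrum into Definition~\ref{def:D}, read off the two limits, and observe that stripping sends the out-of-band level to zero, leaving $K$ equal nonzero values. The paper's one-line proof simply subtracts $b$ exactly, so your continuity argument for a consistent $\hat b$ and your remark that the limits are taken at fixed $M$ refine the same approach rather than change it.
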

\begin{proof}
The expression is Definition~\ref{def:D} on the two-level spectrum; the limits are
immediate, and after subtraction the out-of-band level is zero, leaving $K$ equal nonzero
values.
\end{proof}

\subsection{The algebraic-residue floor option}\label{sec:residue}
The strip above assumes the floor is exactly $1/n$. That holds for an exact i.i.d.\ sample,
but is violated whenever the data are rounded, heaped, or otherwise discretized, conditions
that are common in practice and that lift the high-frequency floor above $1/n$ through
aliasing. The algebraic-diversity framework supplies a data-driven alternative through the
algebraic residue \cite{thornton2026algebraicdiversityprinciplesgrouptheoretic}: the component that the best-matched group
cannot concentrate is the maximum-entropy, white part of the group-averaged covariance, and
its level is read directly from the spectrum rather than assumed. For the squared ECF the
noise bins are exponential, so the robust floor estimate is the order statistic
$\hat b=\operatorname{median}_k\lvert\hat\varphi(t_k)\rvert^2/\ln 2$, the level at which the
spectrum is indistinguishable from white. The structured part is then recovered not by a
hard cut but by the soft Wiener (LMMSE) gain
\begin{equation}\label{eq:softgain}
w_k=\frac{(\lvert\hat\varphi(t_k)\rvert^2-\hat b)_+}{\lvert\hat\varphi(t_k)\rvert^2},
\qquad \hat S_k=w_k\,\lvert\hat\varphi(t_k)\rvert^2,
\end{equation}
which tapers the floor rather than truncating it and so reduces the residual structure left
by a hard cut. The hard strip of \eqref{eq:Gh} is the binary limit
$w_k\in\{0,1\}$ of this same gain. Because the matched group of the binned measure is known
to be cyclic, this requires only the floor-and-gain primitive of the residue construction,
not the group-search or deflation machinery of the general method. We keep the $1/n$ strip
as the default and expose the residue strip as an option; Section~\ref{sec:exp} shows the two
are equivalent on exact samples but that the residue strip is far more robust under heaping.

\section{Adaptive Wiener Density Estimation}\label{sec:wiener}
The kernel taper $\psi(ht)$ is a one-parameter family; nothing requires the optimal
frequency-domain filter to lie in it. Write a general linear estimator as the inverse
transform of $g(t)\hat\varphi(t)$ for a real filter $g$.

\begin{proposition}[Optimal linear filter, Watson and Leadbetter \cite{watsonleadbetter1963}]\label{prop:wiener}
The filter minimizing the mean integrated squared error is the Wiener filter
\begin{equation}\label{eq:W}
g^{\star}(t)=\frac{\lvert\varphi(t)\rvert^{2}}
{\lvert\varphi(t)\rvert^{2}+(1-\lvert\varphi(t)\rvert^{2})/n},
\end{equation}
and the fixed-bandwidth kernel estimator is the restriction $g=\psi(ht)$ of this filter to a
one-parameter family.
\end{proposition}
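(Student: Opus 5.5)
The plan is to pass to the frequency domain with Plancherel and then minimize the resulting integrand one frequency at a time. Write the linear estimate as $\hat f_g(x)=\tfrac{1}{2\pi}\int e^{-\jmath tx}g(t)\hat\varphi(t)\,dt$ and the true density as $f(x)=\tfrac{1}{2\pi}\int e^{-\jmath tx}\varphi(t)\,dt$. Parseval's identity then gives
\begin{equation*}
\int(\hat f_g-f)^2\,dx=\frac{1}{2\pi}\int\lvert g(t)\hat\varphi(t)-\varphi(t)\rvert^2\,dt .
\end{equation*}
Take expectations and exchange them with the integral, which Tonelli permits because the integrand is nonnegative. The MISE becomes $\tfrac{1}{2\pi}\int\mathbb{E}\lvert g(t)\hat\varphi(t)-\varphi(t)\rvert^2\,dt$. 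The filter $g$ is deterministic and real, so its value at each $t$ enters only the integrand at that $t$. It therefore suffices to minimize the integrand pointwise in $t$.

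Fix $t$ and write $\varphi=\varphi(t)$, $\hat\varphi=\hat\varphi(t)$, $g=g(t)$. The proof of Proposition~\ref{prop:floor} gives $\mathbb{E}\hat\varphi=\varphi$ and $\mathbb{E}\lvert\hat\varphi\rvert^2=\lvert\varphi\rvert^2+(1-\lvert\varphi\rvert^2)/n$. Expanding the square,
\begin{equation*}
\mathbb{E}\lvert g\hat\varphi-\varphi\rvert^2=g^2\,\mathbb{E}\lvert\hat\varphi\rvert^2-2g\,\mathrm{Re}\bigl(\varphi^{*}\mathbb{E}\hat\varphi\bigr)+\lvert\varphi\rvert^2
=g^2\Bigl(\lvert\varphi\rvert^2+\tfrac{1-\lvert\varphi\rvert^2}{n}\Bigr)-2g\lvert\varphi\rvert^2+\lvert\varphi\rvert^2 .
\end{equation*}
This is a convex quadratic in the real variable $g$. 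Its leading coefficient $\mathbb{E}\lvert\hat\varphi\rvert^2$ is strictly positive whenever $\lvert\varphi\rvert<1$ or $\varphi\neq0$, so the minimizer is unique and equal to $g^\star=\lvert\varphi\rvert^2/\mathbb{E}\lvert\hat\varphi\rvert^2$. This is exactly \eqref{eq:W}. The degenerate frequencies need separate handling. At $t=0$, $\lvert\varphi\rvert=1$ and $g^\star=1$. Where the denominator vanishes the integrand does not depend on $g$, so any value is optimal and one may set $g^\star$ equal to the formula's limit.

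Three points remain. First, since $g^\star$ minimizes the integrand at every $t$, $\mathrm{MISE}(g^\star)\le\mathrm{MISE}(g)$ for every admissible real $g$. If the minimum value is finite, equality holds only for $g=g^\star$ almost everywhere. Second, the kernel estimator fits this family: by the preceding section its transform is $\hat\varphi(t)\psi(ht)$, so it is the member $g(t)=\psi(ht)$. Third, the MISE of the kernel estimator is therefore bounded below by that of $g^\star$, with the minimization over $h$ running over only a one-parameter subfamily of filters. I read ``restriction'' in this sense, namely that the kernel estimator minimizes the same criterion over a subfamily. It does not mean that $g^\star$ is itself a Gaussian taper.

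The main obstacle is technical rather than algebraic. The filter $g^\star$ is an oracle because it depends on the unknown $\varphi$. Moreover, the Fourier inversion and the exchange of expectation with integration need the estimate to lie in $L^2$, and $\hat\varphi$ is not square-integrable over $\mathbb{R}$. I would restrict attention to filters with $g\hat\varphi\in L^2$, which holds automatically on the finite grid $t_k$ of Section~\ref{sec:ecf}, where the integrals become finite sums. The integrand of $g^\star$ is bounded by $\lvert\varphi\rvert^2$, and $f\in L^2$ makes $\lvert\varphi\rvert^2$ integrable, so $g^\star$ lies in this class and attains the minimum. I would state the result on the grid if the continuum version requires more care.
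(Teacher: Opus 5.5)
Your proposal is correct and is essentially the paper's proof: Parseval, then $\mathbb{E}\hat\varphi=\varphi$ and $\operatorname{Var}\hat\varphi=(1-\lvert\varphi\rvert^2)/n$ to reduce the integrand to the quadratic $(g-1)^2\lvert\varphi\rvert^2+g^2(1-\lvert\varphi\rvert^2)/n$ (your expanded form), pointwise minimization in $g$, and $g=\psi(ht)$ to recover the kernel estimator; your Tonelli, uniqueness, and $L^2$-admissibility remarks add rigor that the paper leaves implicit. One simplification: the leading coefficient $\mathbb{E}\lvert\hat\varphi\rvert^2=\lvert\varphi\rvert^2(1-1/n)+1/n\ge 1/n$ never vanishes, so the degenerate-denominator case you set aside is empty and $g^\star$ is unique at every $t$.
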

\begin{proof}
By Parseval the mean integrated squared error is
$\tfrac{1}{2\pi}\int \mathbb{E}\lvert g\hat\varphi-\varphi\rvert^2\,dt$. Since
$\mathbb{E}\hat\varphi=\varphi$ and
$\operatorname{Var}\hat\varphi=(1-\lvert\varphi\rvert^2)/n$, the integrand is
$(g-1)^2\lvert\varphi\rvert^2+g^2(1-\lvert\varphi\rvert^2)/n$. Differentiating in $g$ and
setting to zero gives \eqref{eq:W}. Choosing $g=\psi(ht)$ recovers the kernel estimator.
\end{proof}
This is the classical optimal-kernel result of Watson and Leadbetter
\cite{watsonleadbetter1963}; the short proof is reproduced to fix notation and to make the
kernel-restriction reading explicit.

\begin{definition}[AD-Wiener estimator]\label{def:wiener}
Estimate $\lvert\varphi(t)\rvert^2$ by the stabilized, floor-stripped squared ECF
$\hat S(t)$ and form $\hat g(t)=\hat S(t)/(\hat S(t)+1/n)$. The AD-Wiener density estimate
is the inverse transform of $\hat g(t)\hat\varphi(t)$, clipped to be nonnegative and
renormalized; this correction of a linear estimate that is not a density does not increase
the integrated squared error \cite{glad2003}.
\end{definition}

The AD-Wiener estimator is thus a one-shot, floor-stripped plug-in realization of the
Watson-Leadbetter filter. It differs from the self-consistent realization of Bernacchia and
Pigolotti \cite{bernacchia2011} in the estimate of the unknown $\lvert\varphi\rvert^2$, a
single strip of the known sampling floor rather than a fixed-point iteration, and in the
floor itself, which the residue strip of Section~\ref{sec:select} measures from the data
rather than fixing at the exact-sampling level; the measured floor is what carries the method
through the rounded and heaped data of Section~\ref{sec:heaping}, where every fixed threshold
fails.

Because $\hat g$ adapts the smoothing per frequency, keeping high-frequency content exactly
where the data support it and discarding it elsewhere, the estimator can resolve structure
at several scales at once, which no single bandwidth can. Proposition~\ref{prop:wiener}
explains why it can match or surpass the best fixed bandwidth: it optimizes over all
per-frequency tapers, of which the kernel family is a slice. The residue floor of
Section~\ref{sec:residue} substitutes directly here: replacing $1/n$ by $\hat b$ and using
the soft gain \eqref{eq:softgain} is exactly the AD-Wiener filter with a data-driven floor,
and is the form we recommend for discretized data.

\section{Experiments}\label{sec:exp}
The selector \eqref{eq:Gh} and the AD-Wiener estimator of Definition~\ref{def:wiener} are
benchmarked on the Marron-Wand test densities \cite{marronwand1992}, including the extreme
multi-scale cases, against Silverman's rule, the diffusion estimator of \cite{botev2010}
applied as an estimator at its improved Sheather-Jones bandwidth, the empirical-
characteristic-function plug-in selector of \cite{chiu1991}, and the adaptive variable-
bandwidth estimator of \cite{abramson1982} with a square-root pilot law. Error is mean
integrated squared error over replications against the known truth; for the fixed-bandwidth
methods a common Gaussian kernel is used so that only the bandwidth differs, and ``best
fixed'' is the ISE-minimizing single bandwidth, the strongest a fixed kernel can do. The AD
methods here use the default $1/n$ strip. Seeds are recorded in \texttt{DATA.md}.

\begin{table*}[t]
\centering
\caption{Mean integrated squared error ($\times10^{3}$) on exact samples. ``best fixed'' is
the ISE-minimizing single bandwidth (Gaussian kernel); the best data-driven method is in \textbf{bold}. Botev is the diffusion estimator at the improved Sheather-Jones bandwidth
\cite{botev2010}, Chiu the ECF plug-in selector \cite{chiu1991}, Abram. the adaptive
variable-bandwidth estimator \cite{abramson1982}.}
\label{tab:kde}
\setlength{\tabcolsep}{6pt}
\begin{tabular}{lr r rrrr rr}
\toprule
density & $n$ & best fix & Silver. & Botev & Chiu & Abram. & AD-bw & AD-Wien. \\
\midrule
Gaussian        & 200  & 2.77 & 3.55 & \textbf{3.33} & 3.66 & 4.71 & 4.00 & 4.49 \\
Gaussian        & 2000 & 0.52 & 0.62 & 0.61 & 0.58 & 0.84 & 0.60 & \textbf{0.56} \\
Bimodal         & 200  & 4.19 & 4.63 & 5.05 & 5.19 & \textbf{4.34} & 5.55 & 5.46 \\
Bimodal         & 2000 & 0.88 & 1.03 & 0.94 & 0.92 & 0.82 & 0.99 & \textbf{0.81} \\
Kurtotic        & 200  & 23.32 & 65.59 & 26.98 & \textbf{24.96} & 25.68 & 25.24 & 25.31 \\
Kurtotic        & 2000 & 4.31 & 20.48 & 4.68 & 4.57 & \textbf{3.30} & 4.55 & 3.79 \\
Claw            & 200  & 23.60 & 48.53 & 31.59 & 42.20 & 49.05 & 32.42 & \textbf{29.56} \\
Claw            & 2000 & 3.73 & 33.93 & 3.91 & 3.87 & 27.82 & 3.93 & \textbf{3.01} \\
Asym. claw      & 200  & 15.38 & 22.79 & 19.68 & 18.42 & 22.52 & \textbf{18.33} & 19.35 \\
Asym. claw      & 2000 & 3.88 & 13.10 & 4.52 & 4.24 & 11.38 & \textbf{4.16} & 4.54 \\
Smooth comb     & 200  & 25.27 & 68.17 & 28.79 & \textbf{27.09} & 63.80 & 27.28 & 28.47 \\
Smooth comb     & 2000 & 6.59 & 43.23 & 7.45 & 6.96 & 40.18 & \textbf{6.84} & 7.27 \\
Discrete comb   & 200  & 26.52 & 92.98 & 29.17 & \textbf{28.47} & 88.83 & 29.75 & 29.04 \\
Discrete comb   & 2000 & 5.06 & 50.33 & 5.28 & 8.35 & 42.20 & 5.16 & \textbf{4.43} \\
Strongly skewed & 200  & 25.14 & 91.71 & 29.91 & \textbf{28.74} & 66.33 & 29.47 & 31.51 \\
Strongly skewed & 2000 & 4.84 & 53.26 & \textbf{5.15} & 5.21 & 29.91 & 5.16 & 5.18 \\
\bottomrule
\end{tabular}
\end{table*}

Table~\ref{tab:kde} places the algebraic methods against the nearest competitors rather than
the rule of thumb alone. One implementation note applies to every Botev column and curve in
this paper: the improved Sheather-Jones bandwidth is computed by a direct implementation of the
Botev-Grotowski-Kroese fixed point \cite{botev2010}. A widely used library implementation was
found to undersmooth systematically, returning a bandwidth about $2.6$ times below the fixed
point's value on smooth targets (e.g.\ $0.075$ against $0.198$ on a standard normal at
$n=5000$, where the asymptotically optimal value is $0.193$), which misrepresents the method;
the first posted version of this paper carried results produced with it, and they are
regenerated here. On the smooth unimodal densities every reasonable method sits near
the best fixed bandwidth, with Botev, Silverman, or Abramson better at the smaller sample and
AD-Wiener better at the larger, where it edges even the best fixed bandwidth. The two strongest
classical competitors are the spectral ones, and they are genuinely close: Chiu's ECF selector,
which shares the spectral viewpoint developed here, is the best data-driven method on several
densities at $n=200$, and the corrected Botev selector is best on the Gaussian at $n=200$ and
the strongly skewed density at $n=2000$, and within a few percent of the leader on most of the
rest. The adaptive estimator is the best method on the kurtotic spike at $n=2000$, where a
single variable bandwidth suits one sharp peak, but it over-adapts and fails on the claw and
the combs. The spectral methods remain the most consistent performers on the genuinely
multi-scale cases: the AD-Wiener estimator is best on the claw at both sample sizes and the
discrete comb at $n=2000$, on the claw at $n=2000$ surpassing the best fixed bandwidth itself
by about twenty percent because no single bandwidth can both resolve the narrow peaks and
smooth the broad base, while the fixed-kernel AD selector leads on the asymmetric claw and the
smooth comb.

\begin{table}[t]
\centering
\caption{Mean integrated squared error ($\times10^{3}$) on heaped data (samples rounded to
$0.1$, $n=2000$). The fixed $1/n$ floor fails as aliasing lifts the true floor; the residue
floor adapts. The simple-strip Wiener estimate is destroyed by the rounding spectrum, the
residue strip is not.}
\label{tab:heaped}
\setlength{\tabcolsep}{5pt}
\begin{tabular}{l rr rr}
\toprule
 & \multicolumn{2}{c}{AD-bw} & \multicolumn{2}{c}{AD-Wiener} \\
\cmidrule(lr){2-3}\cmidrule(lr){4-5}
density & simple & residue & simple & residue \\
\midrule
Gaussian        & 0.70 & \textbf{0.60} & 0.65 & \textbf{0.44} \\
Bimodal         & 1.01 & \textbf{0.93} & 0.88 & \textbf{0.73} \\
Kurtotic        & 151.7 & \textbf{91.0} & 732.4 & \textbf{87.7} \\
Claw            & \textbf{35.1} & 43.2 & 376.0 & \textbf{45.6} \\
Asym. claw      & 85.1 & \textbf{13.7} & 2452.9 & \textbf{14.1} \\
Smooth comb     & 39.6 & \textbf{20.6} & 3874.1 & \textbf{22.5} \\
Discrete comb   & 23.1 & \textbf{19.7} & 2088.6 & \textbf{20.5} \\
Strongly skewed & 257.7 & \textbf{52.3} & 4712.3 & \textbf{62.5} \\
\bottomrule
\end{tabular}
\end{table}

On exact samples the residue strip of Section~\ref{sec:residue} reproduces the default to
within Monte-Carlo error, so Table~\ref{tab:kde} is unchanged by the choice and is not
repeated. The two diverge once the floor assumption is violated. Table~\ref{tab:heaped}
rounds each sample to one decimal, a routine real-data condition, which aliases mass into the
high frequencies and lifts the true floor well above $1/n$. The fixed-floor estimator then
places its cutoff wrongly: for the bandwidth selector this inflates the error several-fold,
and for the Wiener estimator it is catastrophic, since the unsuppressed rounding spectrum
passes through the filter and the inverse transform is overwhelmed by spurious high-frequency
oscillation, with the error rising by one to three orders of magnitude. The residue floor,
read from the spectrum rather than assumed, tracks the elevated level and keeps both
estimators stable. Figure~\ref{fig:heaped} shows the effect directly.

\begin{figure*}[t]
\centering
\includegraphics[width=\textwidth]{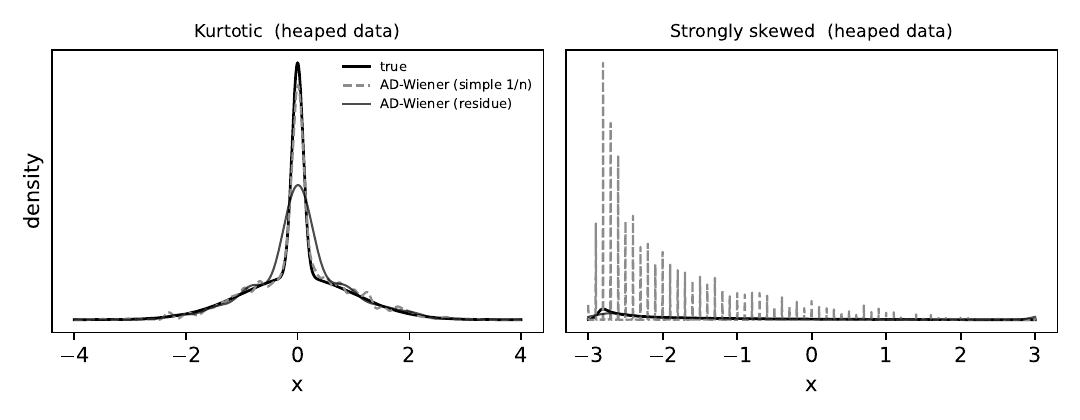}
\caption{Heaped data (rounded to $0.1$). The simple $1/n$ Wiener estimate (dashed) throws
spurious spikes at the rounding locations; the residue-floor Wiener estimate (solid) tracks
the truth.}
\label{fig:heaped}
\end{figure*}

Figs.~\ref{fig:kde} and~\ref{fig:hard} show why. The rule of thumb fuses the claw and comb
peaks into a single mode and misses the strongly skewed spike. The corrected Botev estimate
resolves the claw modes cleanly at this sample size, and its remaining limitation is the one
any single bandwidth has: on the discrete comb it must round the razor-thin peaks to keep the
wide ones smooth. The AD-Wiener estimate tracks the truth across scales, resolving a sharp
spike over a broad base (kurtotic), peaks of decreasing width (asymmetric claw), and a mixture
of wide and razor-thin peaks (discrete comb) at once.

\begin{figure*}[t]
\centering
\includegraphics[width=\textwidth]{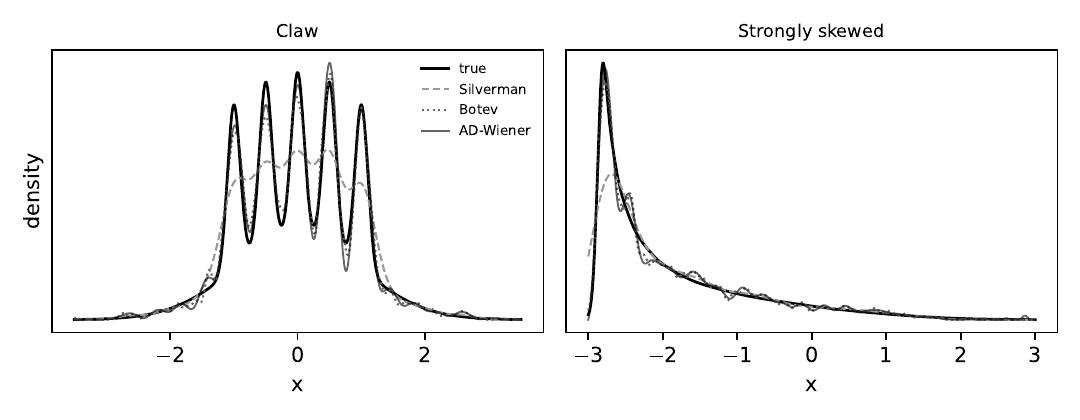}
\caption{Estimates at $n=2000$. The adaptive AD-Wiener estimate resolves the structure that
the rule of thumb smooths away and tracks the truth alongside the corrected Botev estimate.}
\label{fig:kde}
\end{figure*}

\begin{figure*}[t]
\centering
\includegraphics[width=\textwidth]{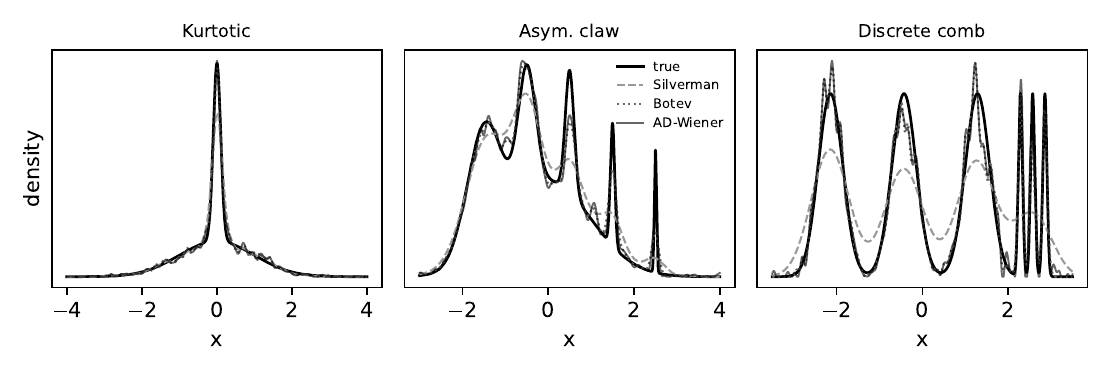}
\caption{Extreme multi-scale densities at $n=2000$: a sharp spike over a broad base, peaks
of decreasing width, and wide together with razor-thin peaks. Fixed-bandwidth methods cannot
serve all scales at once; the per-frequency AD-Wiener filter can.}
\label{fig:hard}
\end{figure*}

\section{Generality of the Enabling Estimator}\label{sec:general}
The bandwidth estimator that enables the above is not special to densities. The same
group-averaged spectrum estimates the occupied bandwidth of a signal, where
$\Beff=\Deff\,f_s/M$ with $f_s$ the sampling rate. It is exact for a flat band
(Theorem~\ref{thm:flat}); for a single look at a stochastic process it reports an effective
width of about half the support, a property of the single-look periodogram that averaging
removes (Propositions~\ref{prop:half} and~\ref{prop:consistency}); and the same floor strip
that serves the density selector controls noise inflation. Validation of the signal-bandwidth
estimator across signal-to-noise ratio, a comparison with the conventional occupied
bandwidth, and an extension to chirps through the metaplectic chirp rate are summarized in
Figs.~\ref{fig:mech}--\ref{fig:b99}; the same noise-floor strip and effective-support
cutoff drive both applications.

\begin{theorem}[Flat-band exactness]\label{thm:flat}
If the spectrum occupies $K$ bins with equal power, $\Deff=K$ and $\Beff=K\,f_s/M$, the true
occupied bandwidth.
\end{theorem}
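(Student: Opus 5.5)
The plan is to evaluate Definition~\ref{def:D} directly on the flat spectrum and then convert the resulting count of bins into a bandwidth. By Theorem~\ref{thm:identity}, the eigenvalues of $\RG$ are the spectral bin powers $\lambda_k$. I will write the hypothesis as $\lambda_k=c>0$ for $k$ in an index set $\mathcal{K}$ with $|\mathcal{K}|=K$, and $\lambda_k=0$ otherwise.

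First I will compute the two trace quantities. These are $\Tr\RG=\sum_k\lambda_k=Kc$ and $\lVert\RG\rVert_F^2=\sum_k\lambda_k^2=Kc^2$. Substituting gives $\Deff=(Kc)^2/(Kc^2)=K$. This is independent of the common level $c$, so the identity is scale-invariant and needs no normalization of the spectrum. It is also the $b=0$ endpoint of the two-level formula in Proposition~\ref{prop:inflate}, so I can cite that proposition as a consistency check.

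Second, I will convert bins to frequency. The DFT of length $M$ at sampling rate $f_s$ has uniform bin spacing $f_s/M$. Hence $K$ occupied bins cover total width $K f_s/M$, and so $\Beff=\Deff\,f_s/M=Kf_s/M$.

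The only point needing care is the meaning of ``the true occupied bandwidth.'' I will define it as the measure of the support of the spectrum, counting one bin width per occupied bin. I will note that the occupied bins need not be contiguous, since the participation ratio counts bins without reference to their arrangement. I will also note that conjugate-symmetric occupancy for real signals is handled by the same counting convention. No step is a real obstacle: the result is a direct substitution once these definitions are in place.
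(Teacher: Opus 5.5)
Your proposal is correct and is the paper's proof: substituting $K$ equal eigenvalues $c$ (rest zero) into the participation ratio gives $(Kc)^2/(Kc^2)=K$, and multiplying by the bin width $f_s/M$ gives $\Beff$. Your additional remarks (independence from $c$, agreement with the $b=0$ case of Proposition~\ref{prop:inflate}, non-contiguous bins) are correct but not needed for the argument.
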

\begin{proof}
With $K$ equal eigenvalues $c$ and the rest zero,
$\Deff=(Kc)^2/(Kc^2)=K$.
\end{proof}

\begin{proposition}[Single stochastic look]\label{prop:half}
If the in-band squared-ECF values are independent and exponential, as for one observation of
a complex Gaussian band-limited process on $K$ bins, then
$\mathbb{E}[\Deff]\approx(K+1)/2$.
\end{proposition}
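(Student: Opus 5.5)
The plan is to compute the expected participation ratio by a ratio-of-expectations approximation, justified by concentration for large $K$. Write the in-band eigenvalues as $\lambda_k=\sigma^2 E_k$ for $k=1,\dots,K$. Here the $E_k$ are i.i.d.\ unit-mean exponentials, and the out-of-band bins are taken to be zero, as in Theorem~\ref{thm:flat} or after the floor strip of Proposition~\ref{prop:inflate}. By Definition~\ref{def:D}, $\Deff=(\sum_k E_k)^2/\sum_k E_k^2$, and the scale $\sigma^2$ cancels. So the statement reduces to a property of i.i.d.\ $\mathrm{Exp}(1)$ variables alone.

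First, I will compute the two moments needed, using $\mathbb{E}E=1$, $\mathbb{E}E^2=2$:
\[
\mathbb{E}\Bigl(\sum_k E_k\Bigr)^2=\operatorname{Var}\Bigl(\sum_k E_k\Bigr)+K^2=K+K^2=K(K+1),
\qquad
\mathbb{E}\sum_k E_k^2=2K .
\]
The ratio of expectations is then exactly $K(K+1)/(2K)=(K+1)/2$, which is the claimed value. The factor $1/2$ is the exponential second moment $\mathbb{E}E^2=2(\mathbb{E}E)^2$. This is the "half the support" behaviour of the single-look periodogram: the flat-band computation of Theorem~\ref{thm:flat} is repeated with $c^2$ replaced by its randomized counterpart.

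Second, I will justify replacing $\mathbb{E}[X/Y]$ by $\mathbb{E}X/\mathbb{E}Y$, which is what the $\approx$ in the statement refers to. I will use the delta method on $(X,Y)=((\sum E_k)^2,\sum E_k^2)$. By the law of large numbers both are concentrated, with $X/K^2\to1$ and $Y/K\to2$, and their relative fluctuations are $O(K^{-1/2})$. Expanding $X/Y$ to second order around the means gives $\mathbb{E}[X/Y]=\mathbb{E}X/\mathbb{E}Y\,(1+O(1/K))$. This leaves an $O(1)$ correction to a quantity of order $K$, so $\mathbb{E}\Deff=(K+1)/2+O(1)$.

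A cleaner alternative is Dirichlet normalization: with $U_k=E_k/\sum_j E_j$, the vector $U$ is Dirichlet$(1,\dots,1)$ and independent of $\sum_j E_j$, and $\Deff=1/\sum_k U_k^2$. Under this representation $\mathbb{E}\sum_k U_k^2=2/(K+1)$ exactly, so $1/\mathbb{E}[1/\Deff]=(K+1)/2$ holds exactly. Jensen's inequality then gives $\mathbb{E}\Deff\ge(K+1)/2$, and the delta method shows the gap is lower order. I would present this version, since it makes the approximation's direction explicit.

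The main obstacle is this last step: controlling $\mathbb{E}[1/\sum_k U_k^2]$ rather than its reciprocal. I would handle it by the second-order expansion of $1/s$ around $s_0=2/(K+1)$, using the Dirichlet variance of $\sum_k U_k^2$. That variance is $O(K^{-3})$, so the relative correction is $O(1/K)$, which suffices for the stated approximation.
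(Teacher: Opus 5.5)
Your first step is the paper's entire proof. The paper computes $\mathbb{E}(\sum_k\lambda_k)^2=K\mu^2+K^2\mu^2$ and $\mathbb{E}\sum_k\lambda_k^2=2K\mu^2$ and reports their ratio. So in the paper the $\approx$ means only that $\mathbb{E}[X/Y]$ is replaced by $\mathbb{E}X/\mathbb{E}Y$, with no error analysis.

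Your Dirichlet step is correct and is a useful addition. Since $\sum_jE_j$ is independent of $U$, $\mathbb{E}\sum_kE_k^2=\mathbb{E}(\sum_jE_j)^2\cdot\mathbb{E}\sum_kU_k^2$. This explains why the paper's ratio of expectations equals $1/\mathbb{E}[1/\Deff]$ exactly, and Jensen then gives the sign of the error, which the paper leaves unstated.

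Your remainder step can be made rigorous without asymptotics. Take $s=\sum_kU_k^2$ and $s_0=\mathbb{E}s=2/(K+1)$. The exact identity $1/s=1/s_0-(s-s_0)/s_0^2+(s-s_0)^2/(s\,s_0^2)$, together with $s\ge 1/K$ (Cauchy--Schwarz, equivalently $\Deff\le K$), gives $0\le\mathbb{E}\Deff-(K+1)/2\le K\operatorname{Var}(s)/s_0^2$. The exact Dirichlet variance is $\operatorname{Var}(s)=4(K-1)/\bigl((K+1)^2(K+2)(K+3)\bigr)$. Substituting it yields $(K+1)/2\le\mathbb{E}\Deff<(K+3)/2$ for every $K$.

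Your analysis also shows something the paper's does not: the gap does not vanish. The second-order term $\operatorname{Var}(s)/s_0^3$ tends to $1/2$, so $\mathbb{E}\Deff\approx K/2+1$ for large $K$. The proposition is therefore an approximation with relative error $O(1/K)$. That is all the ``about half the support'' reading in Section~\ref{sec:general} needs, but it is not agreement to within $o(1)$.
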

\begin{proof}
For independent exponential $\lambda_k$ with mean $\mu$ and $\mathbb{E}\lambda_k^2=2\mu^2$,
$\mathbb{E}(\sum\lambda)^2=K\mu^2+K^2\mu^2$ and $\mathbb{E}\sum\lambda^2=2K\mu^2$; the ratio
of expectations is $(K+1)/2$.
\end{proof}

\begin{proposition}[Consistency under averaging]\label{prop:consistency}
The periodogram averaged over $L$ independent observations converges to the power spectral
density, and its participation ratio converges to that of the spectrum; for a flat band
$\Deff\to K$.
\end{proposition}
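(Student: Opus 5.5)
The plan is to treat the averaged periodogram bin by bin with the strong law of large numbers, and then pass the convergence through the participation ratio by the continuous mapping theorem.

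\emph{Setup.} Let $\lambda_k^{(\ell)}$, $\ell=1,\dots,L$, be the squared-ECF (periodogram) values of the $L$ independent looks. Let $\bar\lambda_k=\tfrac1L\sum_\ell\lambda_k^{(\ell)}$ be their average. By Theorem~\ref{thm:identity}, $\bar\lambda_k$ is the $k$th eigenvalue of the averaged group-averaged covariance $\tfrac1L\sum_\ell\RG^{(\ell)}$. This holds because every $\RG^{(\ell)}$ is diagonalized by the same $F$, so the average stays circulant and its eigenvalues are the averaged squared ECF.

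\emph{Step 1: each bin converges.} Write $P_k=\mathbb{E}\lambda_k^{(\ell)}$ for the expected spectrum, i.e.\ the power spectral density at bin $k$, possibly including a floor term. Since the $\lambda_k^{(\ell)}$ are i.i.d.\ in $\ell$ and have finite mean, the strong law gives $\bar\lambda_k\to P_k$ almost surely for each fixed $k$. If the in-band values are exponential as in Proposition~\ref{prop:half}, the variance is finite and equals $P_k^2/L$, so the rate is also explicit. There are finitely many bins ($M$), so the convergence holds jointly for the whole vector $(\bar\lambda_k)_k$.

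\emph{Step 2: pass through the participation ratio.} By Definition~\ref{def:D}, $\Deff(\lambda)=(\sum_k\lambda_k)^2/\sum_k\lambda_k^2$. This map is continuous on $[0,\infty)^M$ away from $\lambda=0$. The limit vector $P$ is nonzero, since it is a nontrivial spectrum. The continuous mapping theorem therefore gives $\Deff(\bar\lambda)\to\Deff(P)$ almost surely, which is the participation ratio of the spectrum.

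\emph{Step 3: the flat band.} For a flat band, $P_k=c$ on $K$ bins and $0$ elsewhere, and Theorem~\ref{thm:flat} gives $\Deff(P)=K$. If a noise floor $b$ is present, the limit is instead the two-level expression of Proposition~\ref{prop:inflate}. After subtracting a consistent estimate of $b$ and clipping, that expression again tends to $K$, which covers the noisy case.

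\emph{Reconciling with Proposition~\ref{prop:half}.} The halving at $L=1$ comes from $\mathbb{E}\lambda^2=2\mu^2$, where exponential fluctuations inflate $\sum\lambda^2$. After averaging, $\mathbb{E}\bar\lambda_k^2=\mu^2(1+1/L)$. Repeating the computation of Proposition~\ref{prop:half} gives a ratio of expectations of $K(1+K)/(1+K+1/L)\cdot$ (suitably normalized), which is $\approx K/(1+1/L)\to K$. This is a consistent check on the almost-sure limit.

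\emph{Main obstacle.} I expect the main difficulty to be justifying convergence of the expectation $\mathbb{E}[\Deff]$, as opposed to almost-sure convergence of $\Deff$ itself, since $\Deff$ is a ratio. I would handle it by noting that $1\le\Deff\le M$ always holds, by Cauchy--Schwarz. This boundedness lets dominated convergence upgrade the almost-sure limit to convergence in mean.
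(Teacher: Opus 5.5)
Your proof is correct and follows the paper's route: the strong law gives bin-wise convergence of the averaged periodogram, continuity of $\Deff$ away from the origin carries that convergence to the participation ratio, and Theorem~\ref{thm:flat} supplies the flat-band value $K$. One small slip in your side check: redoing the computation of Proposition~\ref{prop:half} with $\mathbb{E}\bar\lambda_k^2=\mu^2(1+1/L)$ gives the ratio of expectations $(KL+1)/(L+1)$, which is $(K+1)/2$ at $L=1$ and tends to $K$, not the expression $K(1+K)/(1+K+1/L)$ that you wrote.
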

\begin{proof}
The strong law gives bin-wise convergence; $\Deff$ is continuous in the eigenvalues away
from the origin.
\end{proof}

\begin{figure*}[t]
\centering
\includegraphics[width=\textwidth]{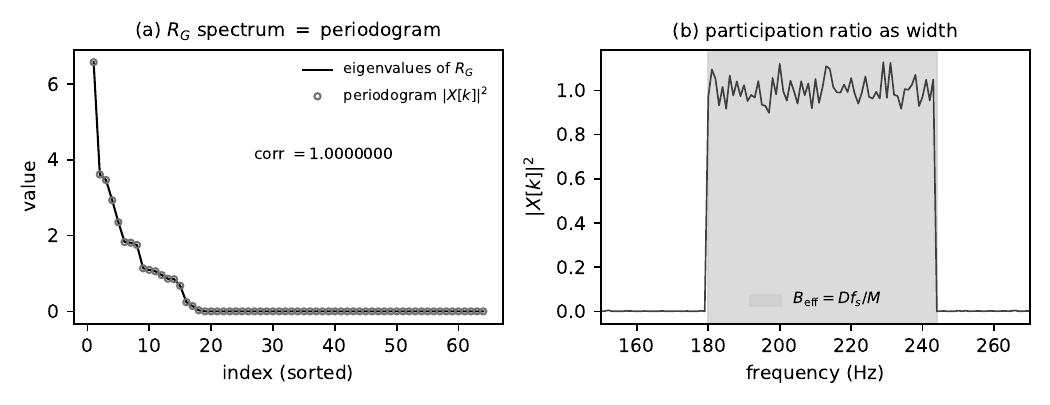}
\caption{The eigenvalues of the group-averaged covariance equal the periodogram; the
participation ratio is the effective occupied width.}
\label{fig:mech}
\end{figure*}
\begin{figure*}[t]
\centering
\includegraphics[width=\textwidth]{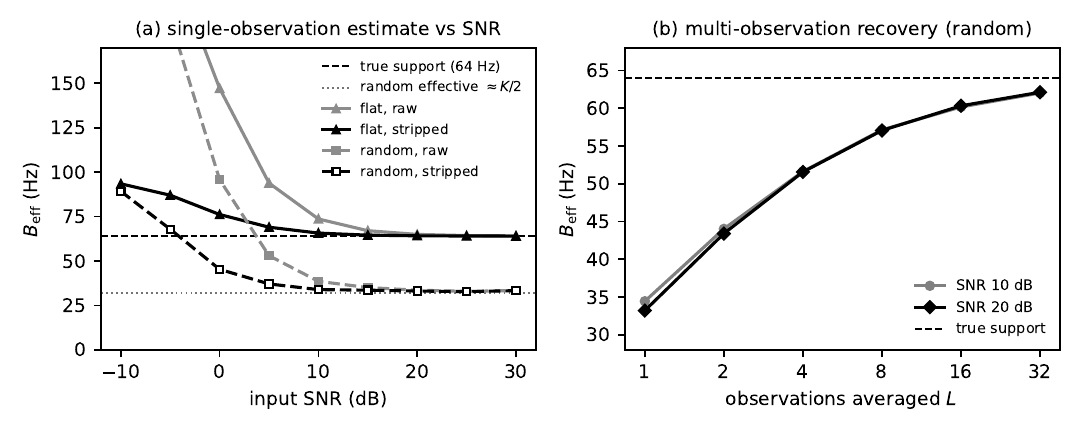}
\caption{Single-observation bandwidth versus signal-to-noise ratio, and recovery of the
support bandwidth by averaging.}
\label{fig:sweep}
\end{figure*}
\begin{figure*}[t]
\centering
\includegraphics[width=\textwidth]{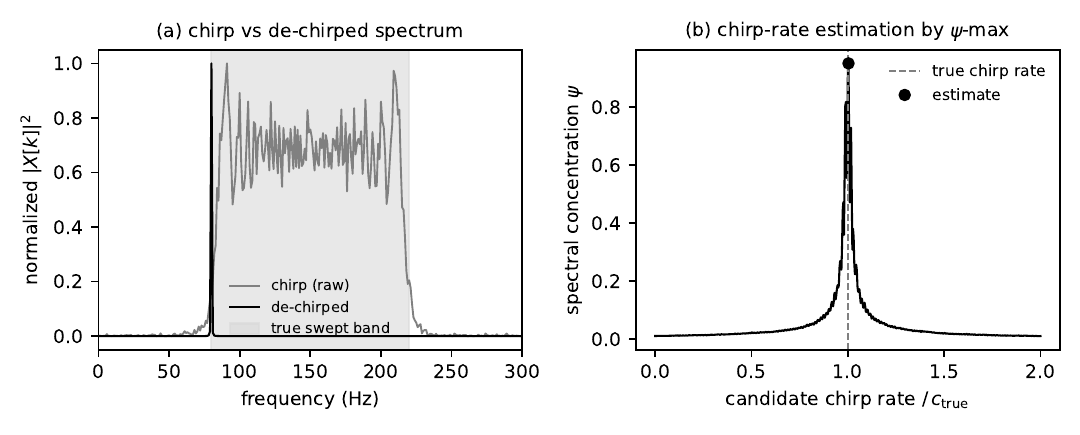}
\caption{For a chirp the cyclic estimate measures the swept bandwidth; the metaplectic chirp
rate, recovered by spectral-concentration maximization, de-chirps to the instantaneous
bandwidth.}
\label{fig:chirp}
\end{figure*}
\begin{figure*}[t]
\centering
\includegraphics[width=\textwidth]{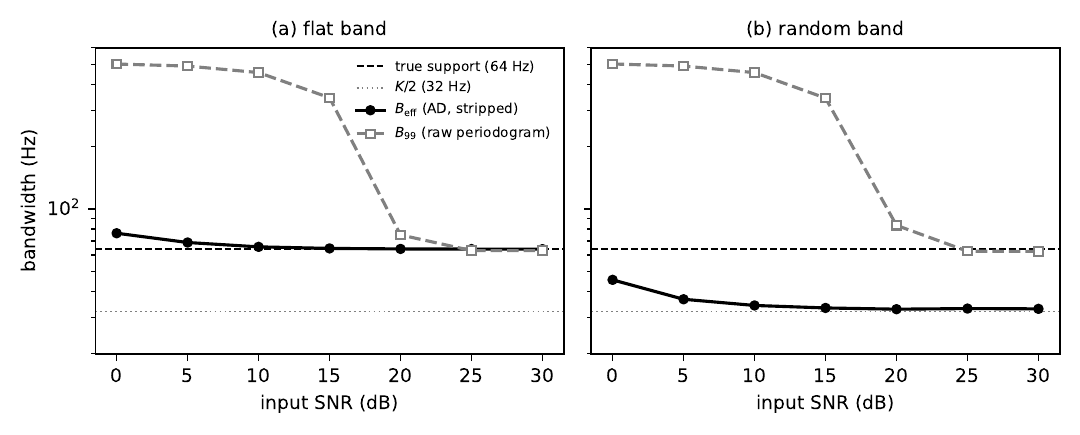}
\caption{The algebraic effective bandwidth against the conventional ninety-nine percent
occupied bandwidth: the former measures an effective width and is far more robust to noise.}
\label{fig:b99}
\end{figure*}

\section{Recognizing and Handling Asymmetric and Heavy-Tailed Densities}\label{sec:hard}
The densities on which a fixed kernel struggles most are the sharply peaked and the strongly
asymmetric, where one global smoothing cannot serve a narrow peak and a broad tail at once.
The algebraic reading both detects this regime and suggests a remedy. As a detector, the
effective dimension $D_2$ of the stripped coherent spectrum (Definition~\ref{def:D}) is
elevated for spike-plus-tail densities: about $5.9$ for the kurtotic density and $4.3$ for the
strongly skewed one, against $2.2$ to $2.9$ for the smooth, bimodal, and comb densities. A
classical skewness or kurtosis statistic carries the same information and separates the
asymmetric case from the symmetric spike.

The remedy follows from the matched-group viewpoint. A fixed grid makes the translation
(cyclic) group the matched symmetry, which suits stationary, near-symmetric targets; a
strongly asymmetric density violates that stationarity. Restoring it with a monotone
symmetrizing transform \cite{wmr1991}, estimating with the AD-Wiener filter in the
transformed domain, and mapping back through the Jacobian recovers the match. On the strongly
skewed density this lowers the mean integrated squared error from $5.31$ to $3.88$
($\times10^{3}$), about a twenty-seven percent reduction, consistent across sample sizes and
seeds. The transform is neutral on the symmetric kurtotic spike, which a monotone warp cannot
symmetrize, and it distorts the regular structure of the comb densities, so it is applied only
when the skewness probe fires; the comb densities, whose coherent effective dimension is low,
do not trigger it. This is the robustness the residue treatment was built for, applied here to
the geometry of the sample rather than to the noise floor.

The construction is not confined to the spectral paradigm. The translation group is only one
choice of matched symmetry, and the same group-averaging can be carried out under any group the
data respect. Under the dilation (scaling) group, for instance, the eigenbasis is a wavelet
multiresolution rather than the Fourier basis, and the coherent-versus-residue split of
Section~\ref{sec:residue} becomes wavelet-coefficient thresholding \cite{mallat1989,donoho1996}.
We tested a translation-invariant realization of this dilation-group estimator and found that on
the smooth Gaussian-mixture densities considered here it does not improve on the spectral
construction, its advantage being confined to densities with genuine discontinuities; the
spectral construction is therefore retained, with the matched-group viewpoint left as the
organizing principle.

\section{Deconvolution under Known Measurement Error}\label{sec:deconv}
Suppose the observations are corrupted by measurement error, $Y = X + \varepsilon$, with
$\varepsilon$ of known distribution independent of $X$; the target is the density of $X$, not of
$Y$. The observed density is the convolution $f_Y = f_X * f_\varepsilon$, so an ordinary kernel
estimator faithfully recovers the blurred $f_Y$ and is biased for $f_X$. In the
characteristic-function domain the convolution is a product, $\varphi_Y = \varphi_X\,
\varphi_\varepsilon$, so deconvolution is division by the known noise characteristic function,
carried out in the same domain in which the bandwidth selector and the AD-Wiener filter already
operate \cite{stefanski1990,fan1991}. This is the spectral form of the Wiener filtering tradition
the adaptive estimator descends from.

Division by $\varphi_\varepsilon$ amplifies the floor. The flat $1/n$ sampling floor on the
empirical characteristic function becomes the frequency-shaped floor
$(1/n)/|\varphi_\varepsilon(t)|^2$, which blows up where $\varphi_\varepsilon$ decays, and this is
the whole difficulty of deconvolution: the floor grows fastest exactly where one most wants to
divide. The AD-Wiener taper carries over unchanged except that it is now applied against this
amplified floor; the cutoff is the frequency at which the deconvolved power descends to meet the
floor, found by comparing the empirical residual to the computed one and stopping where they agree.
No bandwidth is set by hand. The error here is Laplace, ordinary smooth with
$\varphi_\varepsilon(t) = 1/(1 + b^2 t^2)$; the harder super-smooth case such as Gaussian error,
whose characteristic function decays exponentially and forces logarithmic rates \cite{fan1991}, and
the blind case in which $\varphi_\varepsilon$ must itself be estimated, are left to future work.

The true density is bimodal so that the blur merges the two modes that deconvolution then recovers
(Fig.~\ref{fig:deconv}, left): the naive estimator, seeing $f_Y$, smears the modes into a single
bump, while the deconvolution recovers them. Table~\ref{tab:deconv} reports integrated squared
error against the true $f_X$ for the naive estimator, a standard deconvoluting-kernel estimator at
its oracle bandwidth, and the AD deconvolution. The naive estimator barely improves with sample
size, since it is converging to the wrong, blurred density. The AD deconvolution is tuning-free and
improves steadily, trailing the oracle-bandwidth estimator at the smallest samples, matching it by
the moderate ones, and surpassing it at the largest, where its optimal taper outperforms a fixed
kernel. The diagnostic that drives it is shown in
Fig.~\ref{fig:deconv}, right: the deconvolved power meets the computed floor at $t^\ast$, and the
stop fires there.

\begin{table}[t]
\centering
\caption{Deconvolution under known Laplace error ($b=0.7$): integrated squared error ($\times10^3$)
against the true density $f_X$, mean over $20$ replications. The deconvoluting-kernel baseline uses
its oracle bandwidth; AD deconvolution uses none.}
\label{tab:deconv}
\setlength{\tabcolsep}{4.5pt}
\begin{tabular}{r rrr}
\toprule
$n$ & naive KDE on $Y$ & deconv.\ kernel (oracle) & AD deconvolution \\
\midrule
250  & 89.1 & 32.0 & 49.3 \\
500  & 81.6 & 23.0 & 38.1 \\
1000 & 76.6 & 16.1 & 28.6 \\
2000 & 70.8 & 14.2 & 20.0 \\
4000 & 67.4 & 12.0 & \textbf{9.2} \\
\bottomrule
\end{tabular}
\end{table}

\begin{figure*}[t]
\centering
\includegraphics[width=\textwidth]{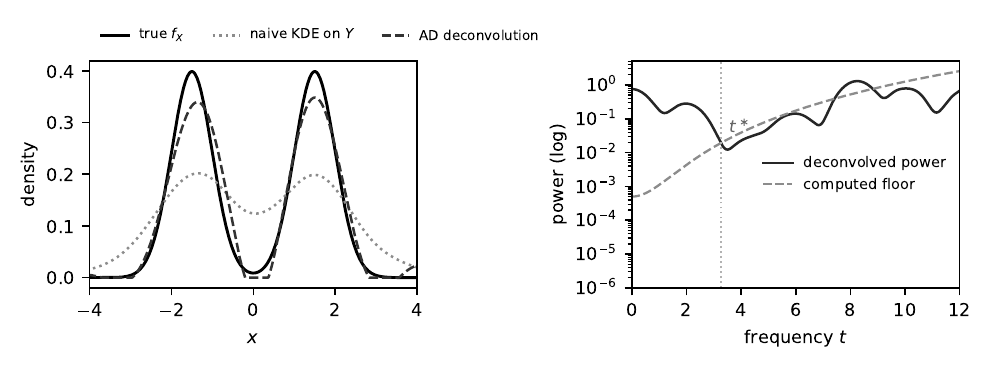}
\caption{Deconvolution under known Laplace error. Left: the naive estimator merges the two modes of
$f_X$, while the AD deconvolution recovers them. Right: the deconvolved power meets the computed
floor $(1/n)/|\varphi_\varepsilon|^2$ at $t^\ast$, where the stop fires; beyond it the rising power
is amplified noise and is discarded.}
\label{fig:deconv}
\end{figure*}

\section{A Locally Partitioned Mixed-Mode Estimator}\label{sec:mixedmode}
The benchmark and the hard-density study show a division of competence: the AD-Wiener filter
resolves sharp, highly structured features such as claws and combs, while a Gaussian mixture is the
more economical estimator on smooth densities. A density that places both kinds of structure in
different regions of the support is served well by neither paradigm alone. A locally partitioned
estimator removes the global commitment: it splits the support, applies each paradigm where that
paradigm is strong, and joins the pieces. The join is the
crux. A hard split is discontinuous at the boundary, which both distorts the estimate there and
breaks the differentiability that gradient-based use needs. We join instead by a smooth partition of
unity
\begin{equation}\label{eq:pou}
w(x)=\tfrac12\!\left(1-\tanh\frac{x-x_0}{\delta}\right),
\end{equation}
an elementary sigmoidal transition, a rescaled logistic function, that passes from one well to the
left of the boundary $x_0$ to zero well to its right across a width set by $\delta$. The form is not
drawn from a particular source; it is the standard smooth-step construction, and
Lemma~\ref{lem:join} records the two properties the join needs. The combined estimate
$\hat f = w\,\hat f_{\mathrm{left}} + (1-w)\,\hat f_{\mathrm{right}}$ inherits the smoothness of its
pieces and is continuous and differentiable across the boundary. The boundary location $x_0$ is supplied by the user; automatic detection of the seam from the data is
left open.

\begin{lemma}[Smooth, artifact-free join]\label{lem:join}
Let $w$ be given by \eqref{eq:pou} with $\delta>0$, and let $\hat f_{\mathrm{left}},\hat
f_{\mathrm{right}}$ be density estimates that are $C^{m}$ on a neighborhood of $x_0$. Then
\emph{(i)} $w\in C^{\infty}(\mathbb{R})$ with $0<w(x)<1$, so the blend $w\hat f_{\mathrm{left}}+(1-w)\hat
f_{\mathrm{right}}$ is $C^{m}$ there and $C^{\infty}$ when the pieces are; and \emph{(ii)} because
$w(x)\in[0,1]$ pointwise, the blend is a pointwise convex combination of the two estimates, so
$\min(\hat f_{\mathrm{left}},\hat f_{\mathrm{right}})\le w\hat f_{\mathrm{left}}+(1-w)\hat
f_{\mathrm{right}}\le \max(\hat f_{\mathrm{left}},\hat f_{\mathrm{right}})$ at every $x$. The join thus
introduces no value above either estimate and none below both: no spurious spike and no dropout, for
any boundary location and any width.
\end{lemma}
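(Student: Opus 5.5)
The plan is to verify the two properties of $w$ directly from \eqref{eq:pou} and then transfer them to the blend using only elementary closure properties. No earlier result of the paper is needed beyond the definition of the blend.

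For (i), I will write $w(x)=\tfrac12\bigl(1-\tanh u\bigr)$ with $u=(x-x_0)/\delta$. Since $\tanh$ is real-analytic on $\mathbb{R}$ and $u$ is affine in $x$ (here $\delta>0$ is used), $w$ is real-analytic, hence $C^\infty$. Because $\tanh$ takes values in the open interval $(-1,1)$ for every finite argument, $1-\tanh u\in(0,2)$, which gives $0<w(x)<1$; consequently $1-w\in(0,1)$ as well.

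For the smoothness of the blend, I will use two facts: the product of a $C^\infty$ function with a $C^m$ function is $C^m$ on the common neighborhood (by the Leibniz rule, each derivative up to order $m$ is a finite sum of continuous terms), and sums of $C^m$ functions are $C^m$. Both $w\hat f_{\mathrm{left}}$ and $(1-w)\hat f_{\mathrm{right}}$ are therefore $C^m$ near $x_0$, and so is their sum. Taking $m=\infty$ gives the $C^\infty$ case.

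For (ii), I fix $x$ and set $a=\hat f_{\mathrm{left}}(x)$, $b=\hat f_{\mathrm{right}}(x)$, $\lambda=w(x)\in[0,1]$. Then $\lambda a+(1-\lambda)b$ lies between $\min(a,b)$ and $\max(a,b)$: we have $\lambda a+(1-\lambda)b\ge \lambda\min(a,b)+(1-\lambda)\min(a,b)=\min(a,b)$, and the upper bound follows symmetrically. This inequality holds for any $x_0$ and any $\delta>0$, which is the "no spike, no dropout" reading. It also shows the blend is nonnegative whenever both pieces are.

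There is no real obstacle here. The only care point is scope: the smoothness claim is local to the neighborhood where both pieces are $C^m$, while the convexity bound in (ii) is pointwise and global. I will state that explicitly, and I will note that the "$C^\infty$ when the pieces are" clause holds on whatever set the pieces are smooth.
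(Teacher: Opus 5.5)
Your proposal is correct and follows essentially the same route as the paper: $w$ is $C^\infty$ with range $(0,1)$ because it is $\tanh$ composed with an affine map, $C^m$ is closed under sums and products for (i), and the elementary convex-combination bound gives (ii). Your remarks on scope (smoothness is local to the neighborhood, the bound in (ii) is pointwise and global) and on nonnegativity of the blend are accurate points that the paper leaves implicit.
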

\begin{proof}
$\tanh\in C^{\infty}$ and $x\mapsto(x-x_0)/\delta$ is affine, so $w\in C^{\infty}$ with range $(0,1)$;
finite sums and products of $C^{m}$ functions are $C^{m}$, which gives (i). For $a\in[0,1]$,
$au+(1-a)v$ lies between $\min(u,v)$ and $\max(u,v)$, which gives (ii).
\end{proof}

The two estimators meet the hypothesis: the AD-Wiener estimate is band-limited, a trigonometric
polynomial and so $C^{\infty}$ except at the isolated points where the non-negativity clip meets
zero, and the Gaussian mixture is a finite sum of Gaussians and so $C^{\infty}$ everywhere. The
mixed-mode estimate is therefore differentiable to all orders on each open region, which is what
gradient-based use requires; a single lemma states this completely, and no separate higher-order
result is needed because the $C^{\infty}$ conclusion already covers every order. Part (ii) was
checked numerically by placing the boundary at adversarial locations, on a claw spike and inside a
smooth peak among them, across widths from $0.05$ to $1.5$: the largest departure of the blend from
the envelope of the two estimates was at the level of machine precision.

To test the construction we take a combined target, a broad Gaussian-mixture base with a claw of
narrow spikes superimposed near the origin, split at its seam $x_0=0$, and evaluate all four
assignments of the two paradigms to the two halves. Fidelity to the specified, ideal density is measured two ways: the
Kullback-Leibler divergence \cite{kullback1951} of the ideal input from the estimated output, and
the symmetric Jensen-Shannon divergence \cite{lin1991}, both with lower values better.
Table~\ref{tab:mixedmode} and Figure~\ref{fig:mixedmode} give the outcome. Assigning the mixture to
the smooth half and AD-Wiener to the claw half is best by both divergences, and it improves on the
best global choice, AD-Wiener on the whole support, by about a third in Kullback-Leibler divergence
($0.007$ against $0.011$): placing the mixture where it is exact removes the small ripple AD-Wiener
leaves on the smooth half. The reversed assignment, AD-Wiener on the smooth half and the mixture on
the claw, is the worst of the four, worse even than the mixture everywhere, because it places each
paradigm where it is weakest. The ordering confirms that a correct local partition improves on every
single global estimator, and that the cost of a wrong partition is real. Read as a check on the
global default, the same table supports AD-Wiener as the global default: among the estimators applied uniformly,
AD-Wiener has the lowest divergence
by a wide margin, $0.011$ in Kullback-Leibler against $0.064$ for the mixture, and only a correct
local partition lowers it further.

\begin{table}[t]\centering
\caption{Fidelity of the four method assignments on the combined smooth-mixture and claw target
($N=8000$, exact generation, partition at the seam $x_0=0$): Kullback-Leibler divergence of the ideal
input from the estimated output, and the symmetric Jensen-Shannon divergence, both lower better. The
mixture on the smooth half with AD-Wiener on the claw half is best and improves on every global
choice; the reversed assignment is worst.}
\label{tab:mixedmode}
\setlength{\tabcolsep}{6pt}
\begin{tabular}{l rr}
\toprule
smooth half $\mid$ claw half & KL$(f\,\|\,\hat f)$ & JS$(f,\hat f)$ \\
\midrule
GMM $\mid$ GMM             & 0.0643 & 0.0222 \\
AD-Wiener $\mid$ AD-Wiener & 0.0105 & 0.0025 \\
AD-Wiener $\mid$ GMM       & 0.0666 & 0.0229 \\
GMM $\mid$ AD-Wiener       & \textbf{0.0074} & \textbf{0.0015} \\
\bottomrule
\end{tabular}
\end{table}

\begin{figure*}[t]
\centering
\includegraphics[width=\textwidth]{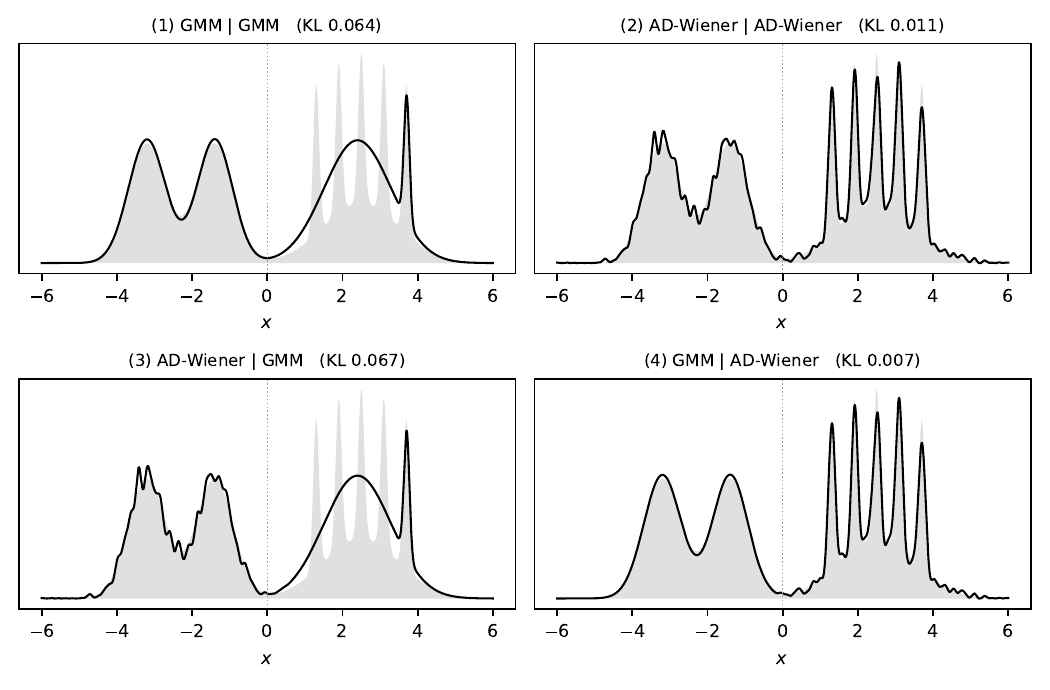}
\caption{The four method assignments on the combined smooth-mixture and claw target, joined by a
smooth partition of unity at the seam $x_0=0$ (dotted line); the gray fill is the ideal target and the
parenthetical value is the Kullback-Leibler divergence. (1) the mixture misses the claw; (2)
AD-Wiener resolves the claw and roughly tracks the smooth half; (3) the reversed assignment is worst,
each paradigm placed where it is weakest; (4) the mixture on the smooth half and AD-Wiener on the
claw is best, capturing both. The join carries no visible discontinuity at the seam.}
\label{fig:mixedmode}
\end{figure*}

The smooth join leaves no discontinuity at the seam, visible in Figure~\ref{fig:mixedmode}. What the
four-combination experiment fixes by hand is the partition itself. The boundary can be supplied
manually, one interval and one method at a time; in the general case, where the target density is
unknown, choosing the boundary from the data alone is a model-selection problem this paper leaves open.
When the target is known, however, as in the synthetic-data quality-checking setting of
Section~\ref{sec:datagen}, the partition can be chosen automatically by a rule that needs no tuning: in
each bin of the support retain whichever estimator has the lower local Kullback-Leibler divergence to
the target, merge adjacent bins of like choice into segments, and join the segments by \eqref{eq:pou} at
the boundaries where the choice changes. This places the boundaries and assigns the methods from the
specified target alone.

Table~\ref{tab:battery} exercises the automatic rule across five generated targets with known
densities, from a smooth Gaussian mixture through a five-spike claw to a density that alternates smooth
bumps with claw combs. On the smooth and
the kurtotic targets the automatic estimator recovers the mixture and matches it; on the claw, the
half-and-half, and the alternating targets it lowers the Kullback-Leibler divergence of the AD-Wiener
default by factors of two to four, by placing the mixture on the smooth stretches and AD-Wiener on the
sharp ones, and it is never worse than the default in that divergence. The rule selects by
Kullback-Leibler divergence, so that divergence falls in every case; the Jensen-Shannon divergence,
which the rule does not target, also falls except on the alternating density, where it is marginally
above the default. The alternating target is the clearest case. Figure~\ref{fig:alternating} shows
AD-Wiener tracking the combs but rippling on the smooth bumps, while the automatic mixed-mode, with
its detected boundaries marked, gives clean bumps and sharp combs together.

\begin{table}[t]\centering
\caption{AD-KDE across five generated targets ($N=8000$, mean over five seeds): Kullback-Leibler and
Jensen-Shannon divergence of the specified target from the estimate, for the mixture, the AD-Wiener
default, and the divergence-guided automatic mixed-mode. The automatic estimator matches the mixture
on the smooth targets and lowers the AD-Wiener default on the heterogeneous ones; the lowest value in
each row and divergence is in bold.}
\label{tab:battery}
\setlength{\tabcolsep}{4.5pt}
\begin{tabular}{l rr rr rr}
\toprule
& \multicolumn{2}{c}{GMM} & \multicolumn{2}{c}{AD-Wiener} & \multicolumn{2}{c}{mixed-auto} \\
\cmidrule(lr){2-3}\cmidrule(lr){4-5}\cmidrule(lr){6-7}
target & KL & JS & KL & JS & KL & JS \\
\midrule
bimodal     & 0.0003 & 0.0001 & 0.0047 & 0.0008 & \textbf{0.0003} & \textbf{0.0001} \\
kurtotic    & 0.0002 & 0.0001 & 0.0113 & 0.0015 & \textbf{0.0002} & \textbf{0.0001} \\
claw        & 0.0442 & 0.0160 & 0.0090 & 0.0013 & \textbf{0.0022} & \textbf{0.0008} \\
half-and-half & 0.0728 & 0.0255 & 0.0115 & 0.0021 & \textbf{0.0050} & \textbf{0.0018} \\
alternating & 0.2078 & 0.0797 & 0.0293 & \textbf{0.0040} & \textbf{0.0218} & 0.0093 \\
\bottomrule
\end{tabular}
\end{table}

\begin{figure*}[t]
\centering
\includegraphics[width=\textwidth]{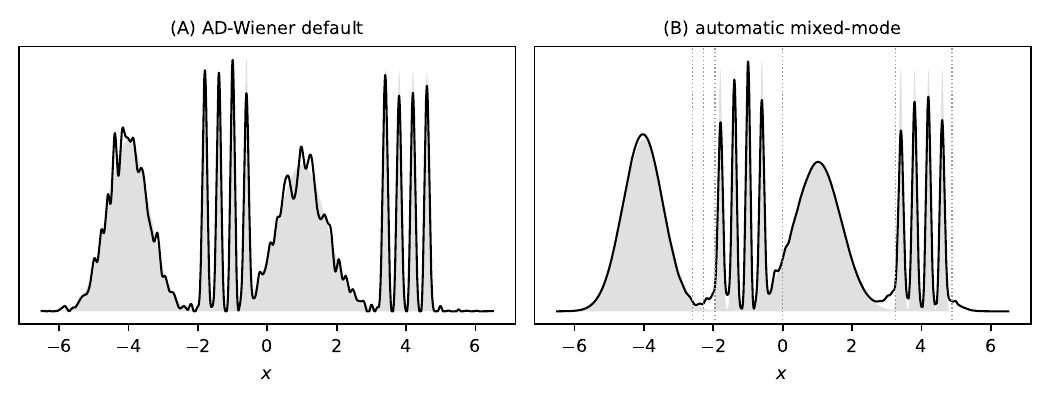}
\caption{The alternating target of smooth bumps and claw combs. (A) the AD-Wiener default resolves the
combs but ripples on the smooth bumps; (B) the divergence-guided automatic mixed-mode, with its
detected boundaries dotted, applies the mixture on the smooth bumps and AD-Wiener on the combs, giving
clean bumps and sharp combs at once. The gray fill is the specified target.}
\label{fig:alternating}
\end{figure*}

The rule above uses the specified target, which the generator and the beacon stream of
Section~\ref{sec:trng} both provide, and there it is automatic and well behaved. Outside that setting
the target is unknown, and a partition read from the data alone runs into a circularity: detecting
where a sharp feature begins and ends is the boundary-placement problem restated. A purely data-driven
surrogate, a cross-validated local-likelihood comparison of the two estimators, bears this out. It is
unreliable in the low-density valleys between sharp features, where the held-out comparison is noisy
and intermittently prefers the mixture, removing exactly the structure AD-Wiener is needed for; and it
cannot separate a smooth base from the closely-spaced spikes that ride on it, the claw, because the
base and the spikes occupy the same support. The framework therefore takes two routes that avoid the
circularity rather than confront it. When the target is known the divergence-guided rule applies
directly. When it is not, the boundaries are supplied by the user, who often knows where the regimes
change and may legitimately want a sharp region resolved by AD-Wiener in one place and smoothed by the
mixture in another, with the method inside each user-given region selected automatically by held-out
likelihood. The fully automatic default, which needs neither a target nor a boundary, is the
superposition of Section~\ref{sec:superpose}: rather than partition the support it decomposes the
density into a smooth base and a sharp residual, and so dissolves the boundary question entirely.

\section{Superposition of a Smooth Base and a Sharp Residual}\label{sec:superpose}
The mixed-mode estimator of Section~\ref{sec:mixedmode} assigns one method to each region of the
support, which presumes the two regimes are spatially separated: smooth here, sharp there. A claw of
narrow spikes sitting on a broad Gaussian base violates that presumption. The base and the spikes
occupy the same locations, so no spatial boundary separates them; a partition either places the spikes
in a region it has called smooth, where the mixture erases them, or it splits the spike cluster and
fits the pieces apart, as a midpoint split of the support does whenever the cut falls inside the
cluster. The structure is superimposed, not adjacent, and calls for a
decomposition rather than a partition.

Write the density as a sum,
\begin{equation}\label{eq:superpose}
\hat f(x) = \hat b(x) + \hat s(x),
\end{equation}
a smooth base $\hat b$ and a sharp residual $\hat s$, each fit by the method suited to it and added
back. A Gaussian mixture is fit by the order selection of Section~\ref{sec:mixedmode} and its
components are split by width at the smoothness scale $\theta=c\,h$, with $h$ the Silverman bandwidth:
the components of standard deviation at least $\theta$ form the base $\hat b$, the narrower components
are set aside. The base then omits whatever is sharper than $\theta$, so the residual mass $p-\hat b$
carries that sharp structure, and the AD-Wiener filter of Section~\ref{sec:wiener} applied to the
residual recovers it as $\hat s$, keeping the coherent part above the $1/n$ floor and discarding the
rest. No region is chosen and no boundary is detected: the same two estimators act on the whole
support, one on the smooth part and one on what it leaves behind. Figure~\ref{fig:superpose} shows the
decomposition on the claw, the mixture base, the band-limited residual carrying the five spikes, and
their sum against the target.

\begin{table}[t]\centering
\caption{Superposition against its two ingredients: Kullback-Leibler divergence to the target, mean
over five seeds at $N=8000$, smoothness scale $c=1.5$. Superposition tracks the better ingredient on
every target.}
\label{tab:superpose}
\begin{tabular}{l c c c}
\toprule
target & GMM & AD-Wiener & superposition \\
\midrule
bimodal      & 0.0003 & 0.0047 & 0.0003 \\
kurtotic     & 0.0002 & 0.0113 & 0.0116 \\
claw         & 0.0442 & 0.0090 & 0.0091 \\
half-and-half & 0.0728 & 0.0115 & 0.0117 \\
alternating  & 0.2078 & 0.0293 & 0.0293 \\
\bottomrule
\end{tabular}
\end{table}

\begin{figure*}[t]\centering
\includegraphics[width=\textwidth]{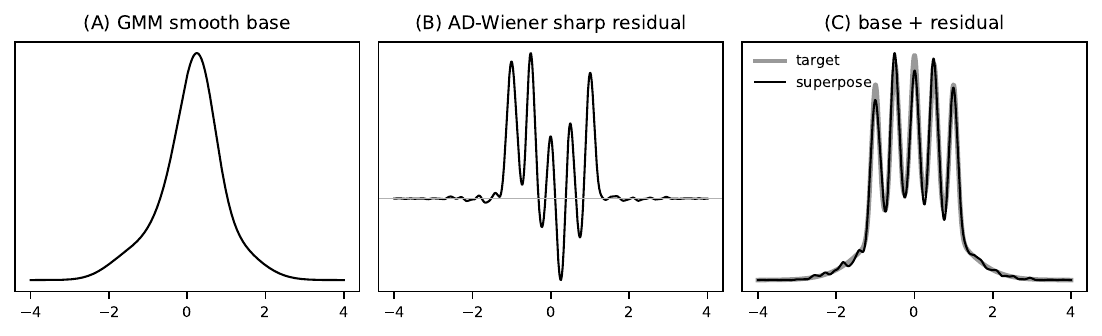}
\caption{Superposition on the claw. (A) The Gaussian-mixture base captures the broad envelope. (B) The
AD-Wiener filter on the residual mass recovers the five spikes the base omits, with small negative
lobes where the base slightly over-predicts. (C) Their sum recovers the target without any spatial
boundary.}
\label{fig:superpose}
\end{figure*}

Table~\ref{tab:superpose} reports the result. Superposition tracks the better of its two ingredients
on every target without choosing between them. On the smooth bimodal and kurtotic densities it matches
the mixture, where AD-Wiener alone carries ripple; on the spiked claw, the half-and-half density, and
the alternating comb it matches AD-Wiener, where the mixture alone cannot place the spikes. It does not
merely avoid the worse method: it recovers both strengths at once, a single estimator that is
mixture-accurate on the smooth part and band-limited-accurate on the sharp part, which neither
ingredient is alone. The contrast with the spatial mixed-mode is sharpest on the claw, the case where
a midpoint split is forced to break the spike cluster. There superposition matches AD-Wiener, $0.0091$ against $0.0090$,
rather than degrading it, precisely because it never tries to separate the base from the spikes in
space. Two limits are worth stating plainly. Superposition ties AD-Wiener on the pure-spike cases
rather than surpassing it, since the band-limited residual carries the ripple AD-Wiener would; and on a
single isolated sharp Gaussian, the kurtotic spike, routing that one component through the residual
gives band-limited rather than mixture-exact accuracy, $0.0116$ against the mixture's $0.0002$. It does
not reach the divergence-guided rule of Section~\ref{sec:mixedmode}, which is given the target, but
unlike that rule it needs neither a target nor a boundary, and unlike any spatial partition it
does not degrade on superimposed structure.

The real-data sections bear this out. Where the sharp structure is well sampled and a single estimator
already resolves it, superposition matches the better estimator without improving on it: it ties
AD-Wiener on the CMS dimuon resonances of Section~\ref{sec:cern} (held-out negative log-likelihood
$0.314$) and on the SDSS redshift features of Section~\ref{sec:sdss} ($-2.009$), and on the
leptokurtic CRSP returns of Section~\ref{sec:crsp} its residual finds no structure above the floor, so
it reduces to the mixture and reproduces the mixture's tail-risk accuracy. Its one real-data gain is
on heaped data, Section~\ref{sec:heaping}: there the residual filter reads the rounding comb as an
artifact below the coherence floor and discards it, so the smooth base recovers the de-heaped density
at a nearly constant error as the heaping coarsens, overtaking even the residue floor at the coarsest
grids. The pattern is consistent: superposition equals the best available estimator on every real
dataset and improves on all of them only where robustness to an artifact is what the data demand.

\section{A Synthetic-Data Application: Density Fidelity under Controlled Generation Error}\label{sec:datagen}
A common task in applied machine learning is to generate a dataset that follows a specified
nonparametric density. A practitioner supplies a target density $f$ and a sample count $N$, a
generator emits $N$ points meant to follow $f$, and the practitioner asks how closely the realized
data match the request. Kernel density estimation answers that question by recovering the realized
density from the generated sample and measuring its departure from $f$. This section frames that
measurement as an application of the present estimators. It is a validation of the measurement
procedure, not a ranking of generators: the generator used here is a parameterized reference whose
departure from the target is known by construction, so the question is whether the recovered
fidelity tracks the true departure, and which estimator recovers it with least bias.

The reference generator draws exactly from an arbitrary supplied density by inverse-cumulative
sampling of its grid, then introduces a controlled defect by the Huber contamination model \cite{huber1964}
$G_\varepsilon = (1-\varepsilon)\,f + \varepsilon\,c$, where $c$ is a contaminant such as a narrow
spurious mode or a mis-modeled tail. With probability $1-\varepsilon$ a draw is exact and with
probability $\varepsilon$ it comes from $c$, so the realized law departs from the target by
$\varepsilon\,\mathrm{TV}(f,c)$ in total variation and $\varepsilon$ is the literal error knob; an
optional jitter convolves the draws with a small Gaussian for a generator that smears rather than
spikes. Algorithm~\ref{alg:datagen} states the procedure. The sample count is bounded to
$N\in[2^{9},2^{18}]$: below a few hundred points the comparison is sampling-noise limited, as the
SDSS pencil beam of Section~\ref{sec:sdss} shows, and above a few hundred thousand all estimators
converge and the comparison is uninformative. Because $G_\varepsilon$ is built rather than observed,
its true departure from $f$ is known in closed form and serves as ground truth, which the real-data
sections of this paper cannot offer.

\begin{algorithm}[t]
\caption{Reference generator with a known error knob}
\label{alg:datagen}
\begin{algorithmic}[1]
\Require target density $f$ on grid $x_g$; count $N$; error $\varepsilon\in[0,1]$; contaminant $c$; jitter $\sigma_j\ge 0$
\State $G \gets (1-\varepsilon)\,f + \varepsilon\,c$ \Comment{realized law; departs from $f$ by $\varepsilon\,\mathrm{TV}(f,c)$}
\State $F \gets \mathrm{cumtrapz}(G,x_g)$; \enspace $F \gets F/F_{\mathrm{end}}$ \Comment{normalized CDF}
\State draw $u_1,\dots,u_N \sim \mathrm{Unif}(0,1)$
\State $d_i \gets F^{-1}(u_i)$ by interpolation \Comment{exact inverse-CDF draws}
\If{$\sigma_j>0$} \State $d_i \gets d_i + \mathcal{N}(0,\sigma_j^2)$ \Comment{optional smear} \EndIf
\State \Return samples $d_1,\dots,d_N$ and the closed-form realized law $G$
\end{algorithmic}
\end{algorithm}

The procedure admits any target the practitioner supplies. One target is the combined
smooth-and-sharp density that the mixed-mode and superposition estimators of
Sections~\ref{sec:mixedmode} and~\ref{sec:superpose} were built to handle. Its left half is a smooth
low-order Gaussian mixture, the regime an adaptive Gaussian mixture recovers most accurately, and its
right half is a Marron-Wand claw, a broad base carrying several narrow spikes
that a bounded-order mixture cannot represent but the AD-Wiener estimator resolves. Generated
exactly ($\varepsilon=0$) and estimated by a single global estimator, this target produces the split of
Table~\ref{tab:datagen} and Figure~\ref{fig:datagen}: the mixture is best on the smooth half by
roughly an order of magnitude, AD-Wiener is best on the claw half by more than an order of
magnitude, and neither is best on both. The current estimator selects between the two paradigms
through a single global effective dimension, here $D_2=2.5$, so it must commit the whole support to
one of them. The same global cutoff that resolves the claw also leaves a faint ripple on the smooth
half, visible in panel~(A), the price of a single decision for a spatially heterogeneous density.

This tie-case fixes the global default used throughout the paper. When no paradigm dominates globally,
breaking the tie toward AD-Wiener bounds the worst case, since its full-support error here is
$1.2\times10^{-3}$ against the mixture's $20\times10^{-3}$, and a mixture chosen wrongly fails on any
sharp or comb structure it cannot parameterize. AD-Wiener is therefore the default estimator
throughout, with the mixture selected only when the effective dimension is clearly low, and the
software accepts a manual override that forces either estimator on the whole support or applies a
user-named method on each of a set of intervals, a partition the practitioner controls directly. The
same heterogeneity is what the locally partitioned estimator of Section~\ref{sec:mixedmode} and the
superposition of Section~\ref{sec:superpose} resolve, by partitioning or decomposing the support rather
than committing it to one paradigm.

The fidelity question proper is whether the recovered density's departure from the requested target
tracks the generator's true departure. Sweeping the error knob $\varepsilon$ and comparing the
recovered total variation $\mathrm{TV}(\hat f,f)$ with the known
$\mathrm{TV}(G_\varepsilon,f)=\varepsilon\,\mathrm{TV}(c,f)$ gives Figure~\ref{fig:datagen_sweep}. On
a mixture-representable target (kurtotic) every estimator recovers the departure, the mixture and
AD-Wiener almost exactly and the ordinary estimate with a small positive bias from its own smoothing.
On the multi-scale \texttt{halfhalf} target the estimators separate: AD-Wiener tracks the true
departure across the whole range, while the ordinary and mixture estimates carry a large constant
bias, roughly $0.12$ to $0.18$ in total variation, that does not vanish at $\varepsilon=0$ because
their model mismatch on the claw is itself read as a departure. For measuring how faithfully a
generator reproduces a spiky or multi-scale target, AD-Wiener is thus the estimator that reports the
true error rather than its own.

\begin{figure*}[t]
\centering
\includegraphics[width=\textwidth]{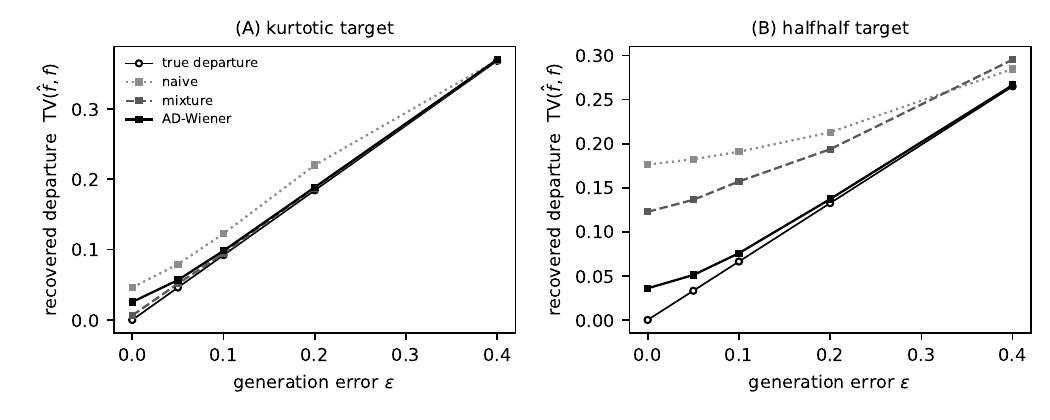}
\caption{Recovered departure $\mathrm{TV}(\hat f,f)$ against the generation error $\varepsilon$,
mean over five seeds at $N=8000$; the open-circle line is the known true departure
$\mathrm{TV}(G_\varepsilon,f)$. (A) On the mixture-representable kurtotic target every estimator
tracks the true departure, the ordinary estimate slightly high. (B) On the multi-scale
\texttt{halfhalf} target AD-Wiener tracks the true departure while the ordinary and mixture estimates
carry a constant model-mismatch bias that persists at $\varepsilon=0$.}
\label{fig:datagen_sweep}
\end{figure*}

Differentiability matters for downstream use, since model builders and digital-twin constructions
often optimize against an estimated density by gradient descent, and the spiky targets above are
exactly where a non-differentiable estimate would hurt. Both estimators are differentiable. A
Gaussian mixture is a finite sum of Gaussians and is infinitely differentiable everywhere, with
closed-form gradients. The AD-Wiener estimate is band-limited, the Wiener taper setting all
frequencies above the cutoff to zero, so its reconstruction is a trigonometric polynomial and is
likewise infinitely differentiable; the only non-smoothness comes from the non-negativity clip,
which produces isolated kinks at the finite set of points where the reconstruction crosses zero and
is absent wherever the band-limited estimate stays non-negative. A manually partitioned estimate, by
contrast, is in general discontinuous at the interval boundaries, which is the matching the automatic
locally partitioned estimator must perform.

\begin{table}[t]\centering
\caption{Integrated squared error ($\times10^{3}$) on the \texttt{halfhalf} target ($N=8000$, exact
generation), by half. The mixture is best on the smooth half, AD-Wiener on the claw half; neither is
best on both, while the global effective dimension $D_2=2.5$ selects one paradigm for the whole
support.}
\label{tab:datagen}
\setlength{\tabcolsep}{6pt}
\begin{tabular}{l rrr}
\toprule
estimator & smooth half & claw half & full \\
\midrule
naive KDE     & 2.07 & 24.13 & 26.20 \\
mixture (GMM) & \textbf{0.09} & 20.14 & 20.23 \\
AD-Wiener     & 0.43 & \textbf{0.75} & \textbf{1.18} \\
\bottomrule
\end{tabular}
\end{table}

\begin{figure*}[t]
\centering
\includegraphics[width=\textwidth]{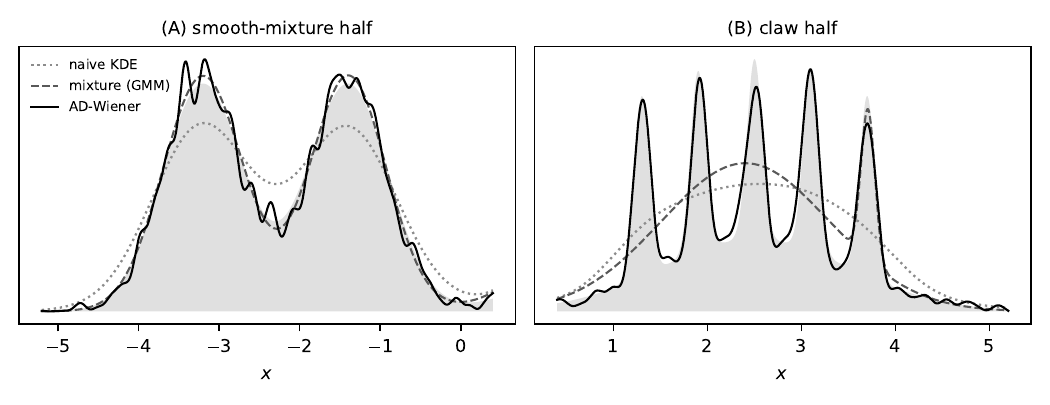}
\caption{The \texttt{halfhalf} target, generated exactly ($\varepsilon=0$) and estimated globally.
(A) On the smooth-mixture half the adaptive mixture tracks the target while AD-Wiener slightly
over-resolves it and the ordinary estimate over-smooths. (B) On the claw half AD-Wiener resolves the
narrow spikes that the bounded-order mixture and the ordinary estimate both miss. A single global
estimator is optimal on neither half.}
\label{fig:datagen}
\end{figure*}

\section{A Financial Application: Tail Risk under Leptokurtosis}\label{sec:finance}
Asset returns are sharply peaked and heavy-tailed, the leptokurtic shape the kurtotic density
caricatures. We model daily returns as a calm and turbulent normal mixture, ninety percent at a
$0.8\%$ daily volatility and ten percent at $3\%$, an excess kurtosis of $8.7$, and estimate the
return density and its left-tail risk, the Value-at-Risk and Expected Shortfall, from $n=1000$
returns. A Gaussian fit is the standard parametric baseline; the spectral AD-Wiener estimator and
an ordinary kernel estimate are nonparametric; the adaptive mixture is supplied through the same
plugin interface.

\begin{table}[t]
\centering
\caption{Leptokurtic daily returns (excess kurtosis $8.7$): integrated squared error of the
density and absolute error of the one- and five-percent Value-at-Risk and one-percent Expected
Shortfall, against the true values. Tail errors in basis points of daily return. ISE is mean
$\pm$ standard deviation over $200$ replications.}
\label{tab:fin}
\setlength{\tabcolsep}{6pt}
\begin{tabular}{l r rrr}
\toprule
method & ISE ($\times10^{3}$) & VaR$_{1\%}$ & VaR$_{5\%}$ & ES$_{1\%}$ \\
\midrule
Gaussian   & $2341\pm592$ & 103 & 39 & 204 \\
KDE        & $133\pm74$ & \textbf{1.3} & 3.5 & 4.3 \\
AD-Wiener  & $250\pm136$ & 11.1 & \textbf{0.4} & 24.3 \\
Mixture    & $\mathbf{41.5\pm40.1}$ & 4.8 & 0.3 & \textbf{3.8} \\
\bottomrule
\end{tabular}
\end{table}

\begin{figure}[t]
\centering
\includegraphics[width=0.82\columnwidth]{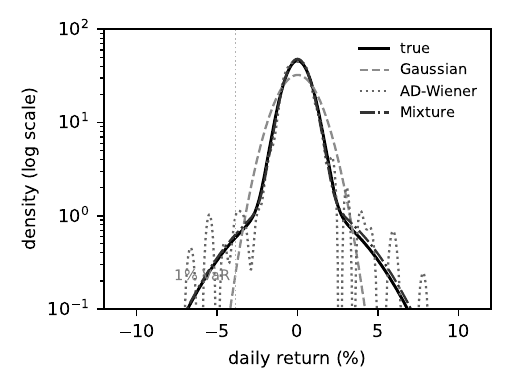}
\caption{Estimated return densities on a logarithmic scale. The Gaussian fit collapses the peak
and underweights the tails; the mixture and AD-Wiener estimators track both.}
\label{fig:fin}
\end{figure}

The Gaussian fit cannot represent the peak and the tails at once and underestimates the
one-percent Value-at-Risk by about a full percent of daily return and the Expected Shortfall by
twice that (Table~\ref{tab:fin}, Fig.~\ref{fig:fin}), the direction that matters for risk. The
spectral AD-Wiener and ordinary kernel estimators track the tail far better. The adaptive mixture,
which matches the generating structure, is best on both the density and the tail risk, recovering
the two regimes and cutting the density error more than fiftyfold against the Gaussian. The
kurtotic case, read as a return distribution, is exactly where the algebraic and mixture
estimators earn their value.

\section{Real-Data Validation on CRSP Returns}\label{sec:crsp}
The synthetic study of Section~\ref{sec:finance} is validated on real daily returns from the CRSP
database over 2005--2024: the value- and equal-weighted market indices, the S\&P~500, and a panel
of ten large common stocks, $5{,}033$ trading days each. Extraction and analysis are scripted, so
the study reruns on any CRSP subscription: an extraction step pulls the daily series into a
\texttt{data/} directory, and an analysis step applies the four estimators and writes the results
as JSON into a \texttt{results/} directory. Because the true law is unknown for real returns, each
estimator is measured against the historical (empirical) quantile and shortfall.

\subsection{Asset return distributions}
Every series is sharply peaked and heavy-tailed. Fig.~\ref{fig:crsp_dist} sets the S\&P~500 daily
return density against its histogram: the Gaussian fit collapses the peak and underweights the
tails, while the AD-Wiener and mixture estimators track both, exactly as on the synthetic kurtotic
density.

\begin{figure}[t]
\centering
\includegraphics[width=0.82\columnwidth]{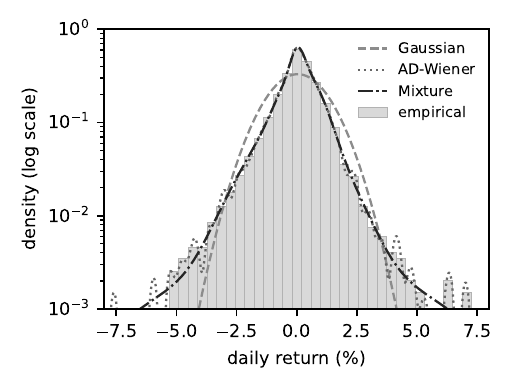}
\caption{S\&P~500 daily return density (log scale) against the empirical histogram. The Gaussian
fit misses the peak and the tails; the AD-Wiener and mixture estimators track both.}
\label{fig:crsp_dist}
\end{figure}

\subsection{Risk estimation}
Table~\ref{tab:crsp_risk} reports annualized volatility, skewness, excess kurtosis, and a Hill
left-tail index per series. Excess kurtosis runs from $5.4$ to $19.7$ and the Hill index from
$2.5$ to $3.5$, tails far heavier than Gaussian across the panel, with the market indices mildly
left-skewed and the heaviest tail in the bank stock.

\begin{table}[t]
\centering
\caption{Risk summary for the CRSP series, daily returns 2005--2024 ($n=5{,}033$ each):
annualized volatility, skewness, excess kurtosis, and Hill left-tail index.}
\label{tab:crsp_risk}
\setlength{\tabcolsep}{5pt}
\begin{tabular}{l rrrr}
\toprule
series & vol (ann.) & skew & exc.\ kurt. & Hill \\
\midrule
market (vw) & 0.191 & $-0.38$ & 11.9 & 2.77 \\
market (ew) & 0.181 & $-0.56$ & 11.1 & 2.59 \\
S\&P 500    & 0.192 & $-0.26$ & 12.8 & 2.70 \\
AAPL & 0.321 & $-0.02$ & 5.4  & 3.45 \\
MSFT & 0.271 & $0.24$  & 9.9  & 3.07 \\
XOM  & 0.266 & $0.22$  & 10.1 & 3.13 \\
JPM  & 0.364 & $0.96$  & 19.7 & 2.51 \\
GE   & 0.330 & $0.25$  & 9.0  & 2.65 \\
KO   & 0.181 & $0.08$  & 12.8 & 2.96 \\
PG   & 0.180 & $0.11$  & 10.4 & 2.78 \\
JNJ  & 0.171 & $0.23$  & 11.7 & 2.91 \\
WMT  & 0.201 & $0.23$  & 12.9 & 2.78 \\
INTC & 0.327 & $-0.52$ & 12.3 & 2.77 \\
\bottomrule
\end{tabular}
\end{table}

\subsection{Value-at-Risk and tail risk}
Table~\ref{tab:crsp_tail} aggregates accuracy as the mean absolute deviation from the historical
reference across the thirteen series. The Gaussian fit understates the one-percent Value-at-Risk
by $68$ basis points and the one-percent Expected Shortfall by $208$ basis points, more than two
percent of a daily return, and it understates the loss in all thirteen series. The kernel,
AD-Wiener, and adaptive mixture estimators are within roughly ten basis points of the historical
reference at every level. Fig.~\ref{fig:crsp_tail} shows the one-percent Expected Shortfall by
series: the Gaussian estimates sit systematically inside the historical loss, while the other
three lie on the diagonal. The synthetic kurtotic result of Section~\ref{sec:finance} is thereby
reproduced on real returns; the algebraic and mixture estimators recover the tail risk the
Gaussian misses, which is the quantity of regulatory interest.

\begin{table}[t]
\centering
\caption{Tail-risk accuracy on CRSP returns: mean absolute deviation from the historical reference
across the thirteen series, in basis points of daily return.}
\label{tab:crsp_tail}
\setlength{\tabcolsep}{6pt}
\begin{tabular}{l rrrr}
\toprule
 & Gaussian & KDE & AD-Wiener & Mixture \\
\midrule
VaR$_{1\%}$ & 68.1  & \textbf{2.6} & 4.5  & 9.6 \\
VaR$_{5\%}$ & 26.5  & 2.4 & \textbf{2.1} & 3.7 \\
ES$_{1\%}$  & 208.1 & \textbf{8.0} & 11.1 & 9.5 \\
ES$_{5\%}$  & 45.7  & \textbf{2.3} & 3.1  & \textbf{2.3} \\
\bottomrule
\end{tabular}
\end{table}

\begin{figure}[t]
\centering
\includegraphics[width=0.82\columnwidth]{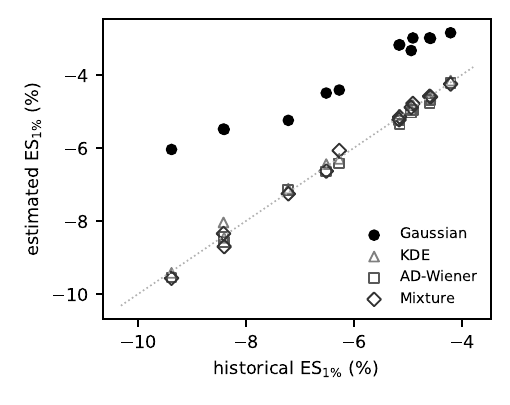}
\caption{Estimated against historical one-percent Expected Shortfall for the thirteen CRSP series.
The Gaussian estimates lie inside the historical loss (above the diagonal); the kernel, AD-Wiener,
and mixture estimates lie on it.}
\label{fig:crsp_tail}
\end{figure}

\section{Real-Data Heaping in Survey Self-Reports}\label{sec:heaping}
The synthetic heaped experiment of Section~\ref{sec:exp} is borne out on survey microdata from the
National Health and Nutrition Examination Survey (NHANES) 2017--2018 \cite{nhanes2018}. Two
complementary studies are reported: a controlled one with a continuous ground truth, and a naturally
heaped variable for which no such truth exists.

The body-measures file records each adult's measured weight, a continuous quantity. Imposing a known
heaping grid of width $D$ on these measurements and estimating the density from the rounded values
isolates the response of each estimator to coarsening, with the no-heaping density, a Gaussian kernel
estimate on the unrounded measurements, serving as the reference. This is a controlled robustness
study rather than a claim about how weight is reported; self-reported weight in fact heaps mildly,
near $5$~lb, where every estimator agrees. Table~\ref{tab:heaping_real} gives the integrated squared
error against the no-heaping reference as $D$ coarsens. Below about half the kernel bandwidth the
heaping is invisible and the three estimators coincide. As $D$ grows the fixed $1/n$ floor is
overwhelmed by the rounding spectrum and its error rises by four to five orders of magnitude; the
ordinary kernel estimate develops spurious modes at the rounding marks and degrades steadily; the
residue floor, read from the spectrum, tracks the elevated level and stays within a small multiple of
the no-heaping error throughout. The superposition of Section~\ref{sec:superpose} is steadier still:
its residual filter reads the rounding comb as structure below the coherence floor and discards it, so
the smooth base recovers the de-heaped density at a nearly constant error across the whole range and
overtakes the residue floor at the two coarsest grids, where even the residue floor has begun to
drift. The price is a slightly higher error at fine heaping, where the residue floor is near exact and
the base carries a small parametric bias. Figure~\ref{fig:heaping_weight} shows the divergence and, at
a $30$~lb grid, the spurious oscillation that the robust estimators suppress.

\begin{table}[t]\centering
\caption{Integrated squared error ($\times10^{3}$) against the no-heaping density as the imposed
heaping grid $D$ coarsens, on NHANES measured adult weights ($n=5173$). The fixed floor fails
catastrophically once the grid exceeds the kernel bandwidth; the residue floor stays robust; the
superposition is flat across the range, overtaking the residue floor at the coarsest grids.}
\label{tab:heaping_real}
\setlength{\tabcolsep}{6pt}
\begin{tabular}{r rrrr}
\toprule
$D$ (lb) & naive KDE & AD simple & AD residue & superposition \\
\midrule
 5 & 0.000 & 0.018 & \textbf{0.006} & 0.021 \\
10 & 0.001 & 0.003 & 0.006 & 0.019 \\
20 & 0.087 & 306.7 & \textbf{0.009} & 0.016 \\
30 & 1.349 & 501.4 & 0.037 & \textbf{0.016} \\
40 & 8.369 & 718.5 & 0.087 & \textbf{0.018} \\
\bottomrule
\end{tabular}
\end{table}

\begin{figure*}[t]
\centering
\includegraphics[width=\textwidth]{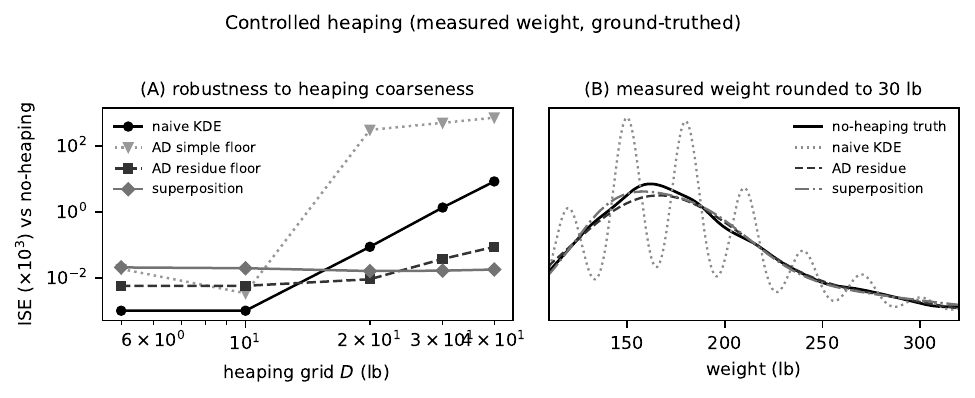}
\caption{Controlled heaping on NHANES measured weights. (A) Integrated squared error against the
no-heaping density as the imposed grid coarsens: the fixed floor fails catastrophically, the ordinary
estimate degrades, the residue floor stays robust, and the superposition is flat across the range,
overtaking the residue floor at the coarsest grids. (B) At a $30$~lb grid the ordinary estimate
oscillates at the rounding marks while the residue floor and the superposition track the truth.}
\label{fig:heaping_weight}
\end{figure*}

A variable that heaps coarsely with no imposition is the self-reported number of cigarettes smoked
per day, for which the survey instrument itself states that one pack equals twenty cigarettes. Among
the $1019$ adult smokers reporting a daily count, $42\%$ gave a multiple of ten and $22\%$ a multiple
of twenty, against the ten and five percent expected under no digit preference, and the terminal digit
zero alone accounts for $42\%$ of reports. This concentration on round numbers reproduces the heaping
documented for retrospective cigarette counts \cite{wang2012truth}. Such counts have no continuous
ground truth, since the available objective correlate, serum cotinine, measures exposure and not
count, so this study is qualitative. Figure~\ref{fig:heaping_cig} shows the consequence: the ordinary
kernel estimate either reproduces the spurious spikes at ten and twenty or, at a bandwidth wide enough
to remove them, erases genuine structure, whereas the residue floor reads the heaping as a narrow band
of spectral power above the noise level and recovers a smooth density. The residue floor is thus the
robust choice precisely where heaping is coarse, the regime into which recall and count data fall.

\begin{figure*}[t]
\centering
\includegraphics[width=\textwidth]{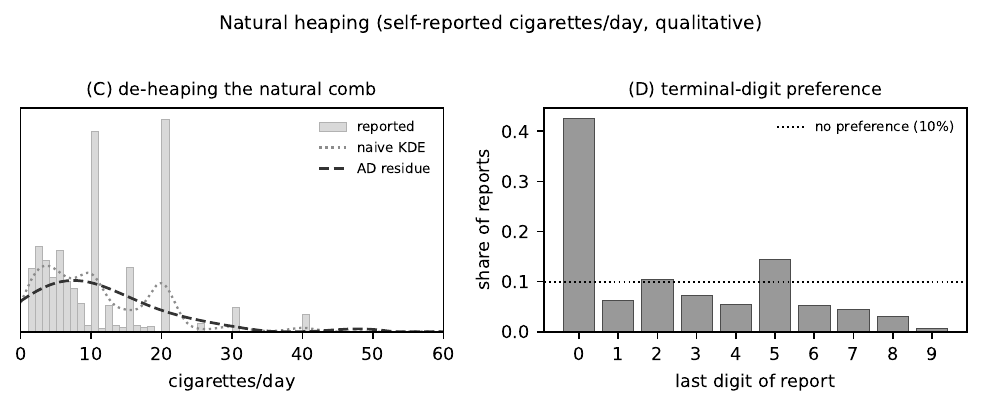}
\caption{Natural heaping in self-reported cigarettes per day. (C) The reported counts spike at ten
and twenty; the ordinary kernel estimate follows the comb while the residue floor recovers a smooth
density. (D) Terminal-digit shares against the ten-percent no-preference line, showing the
concentration on zero.}
\label{fig:heaping_cig}
\end{figure*}

\section{Real-Data Validation on Particle-Physics Spectra}\label{sec:cern}
A second real-data study turns to a setting with no parametric mixture structure: the dimuon
invariant-mass spectrum from the CMS experiment \cite{cms_dimuon2017}, in which narrow resonances,
the $J/\psi$ at $3.10$, the $\Upsilon$ family near $9.5$, and the $Z$ at $91.2$~GeV, sit on a smooth
and steeply falling combinatorial background. Estimating this density is a multi-scale problem: no
single bandwidth resolves the narrow peaks and the broad continuum at once, the regime the adaptive
estimator targets. Because the underlying density is unknown, the estimators are scored by held-out
fit. The $99{,}165$ opposite-sign events are split into a $70$ percent training set and a disjoint
$30$ percent test set; each estimator is built on the training set and scored by the mean negative
log-likelihood it assigns to the test set, in the log-mass variable so the decade-spanning resonances
are comparably scaled. Table~\ref{tab:cern} reports the result: the adaptive estimators improve on the
ordinary kernel estimate by about a quarter of a nat, the fixed bandwidth being unable to sharpen the
resonances without roughening the continuum. Figure~\ref{fig:cern} shows the spectrum and a zoom on
the $J/\psi$ region, where the ordinary estimate rounds the peak that the AD-Wiener estimate resolves.

\begin{table}[t]\centering
\caption{Held-out mean negative log-likelihood (log-mass) on the CMS dimuon spectrum, lower is better;
$99{,}165$ events, $70/30$ train/test split.}
\label{tab:cern}
\begin{tabular}{l r}
\toprule
estimator & held-out NLL \\
\midrule
naive KDE  & 0.429 \\
AD-bw      & 0.315 \\
AD-Wiener  & \textbf{0.314} \\
\bottomrule
\end{tabular}
\end{table}

\begin{figure*}[t]
\centering
\includegraphics[width=\textwidth]{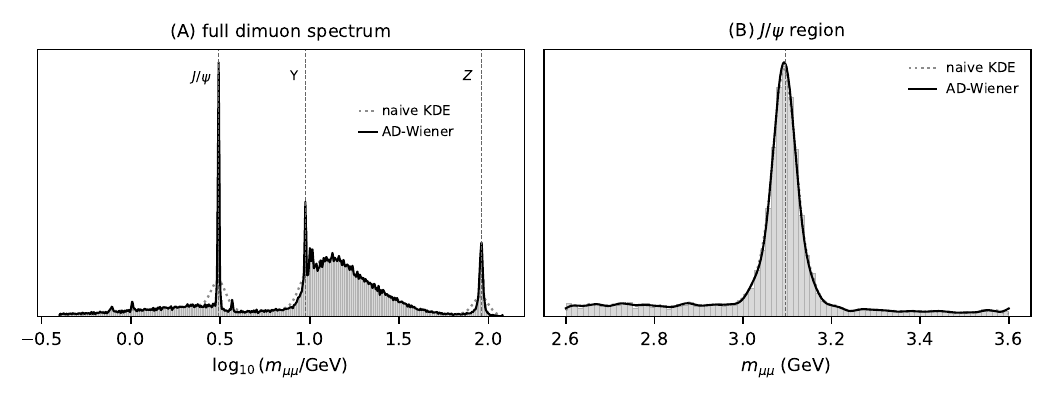}
\caption{CMS dimuon invariant-mass spectrum (CMS Open Data, record 545). (A) The full spectrum in
log-mass with the $J/\psi$, $\Upsilon$, and $Z$ resonances marked; the ordinary kernel estimate rounds
the resonances while the AD-Wiener estimate resolves them. (B) The $J/\psi$ region, where the ordinary
estimate broadens the peak.}
\label{fig:cern}
\end{figure*}

\section{Real-Data Validation on Galaxy-Redshift Spectra}\label{sec:sdss}
A further held-out study turns to astronomy, where the same multi-scale geometry arises from an
unrelated physical mechanism. In a cone of the sky the spectroscopic galaxy-redshift distribution
$n(z)$ is a smooth selection envelope on which large-scale structure, the walls, clusters, and
voids of the cosmic web, prints narrow overdensities at particular redshifts. The setting is
structurally that of the dimuon spectrum of Section~\ref{sec:cern}: sharp features carrying real
probability mass on a smooth background, a density no single bandwidth resolves. What it adds is
the demonstration that the advantage is statistical rather than physical. The dimuon peaks are
quantum resonances of fixed natural width on a combinatorial continuum, whereas the redshift walls
are the imprint of galaxy clustering on a survey selection function; the two share nothing
physically, yet they present the empirical characteristic function with the same multi-scale
profile and the same estimator recovers it. The case thus isolates the claim made throughout, that
the adaptive estimator earns its advantage wherever narrow structure carries mass and a global
bandwidth is wrong for it, from any particular generative model.

The data are the spectroscopic redshifts of the SDSS DR18 \texttt{SpecObj} catalogue
\cite{sdss_dr18}, retrieved through the SkyServer SQL interface \cite{sdss_skyserver}. A
five-degree-radius cone about right ascension $200^\circ$ and declination $0^\circ$, restricted to
class \textsc{galaxy} with a clean redshift flag and $0.01 \le z \le 0.22$, returns $6{,}684$
redshifts. The field is wide enough that the dominant walls, which span many degrees, are sampled
by enough galaxies to be resolved rather than lost in counting noise; this sample size is what
makes the advantage real, since a narrow pencil beam of a few hundred galaxies does not separate
the adaptive gain from resampling variation. As for the dimuon spectrum the true density is
unknown, so the estimators are scored by held-out fit on a $70/30$ train and test split by the mean
negative log-likelihood assigned to the test set.

Table~\ref{tab:sdss} reports the result: both spectral estimators improve on the ordinary kernel
estimate by about $0.13$ nat, the fixed bandwidth being unable to sharpen the walls without
roughening the smooth envelope. The bandwidth selector and the Wiener estimator are
indistinguishable here, as they are on the well-sampled dimuon spectrum, so the gain is
attributable to the spectral cutoff rather than to the per-frequency taper: the residue floor and
its cutoff recover the structure once the data resolve it. Figure~\ref{fig:sdss} shows the $n(z)$
and a zoom on the dominant wall near $z \approx 0.08$, where the ordinary estimate rounds the
overdensity that the AD-Wiener estimate resolves.

\begin{table}[t]\centering
\caption{Held-out mean negative log-likelihood on the SDSS DR18 galaxy-redshift density, lower is
better; $6{,}684$ galaxies in a five-degree cone, $70/30$ train/test split.}
\label{tab:sdss}
\begin{tabular}{l r}
\toprule
estimator & held-out NLL \\
\midrule
naive KDE  & $-1.876$ \\
AD-bw      & $\mathbf{-2.011}$ \\
AD-Wiener  & $-2.009$ \\
\bottomrule
\end{tabular}
\end{table}

\begin{figure*}[t]
\centering
\includegraphics[width=\textwidth]{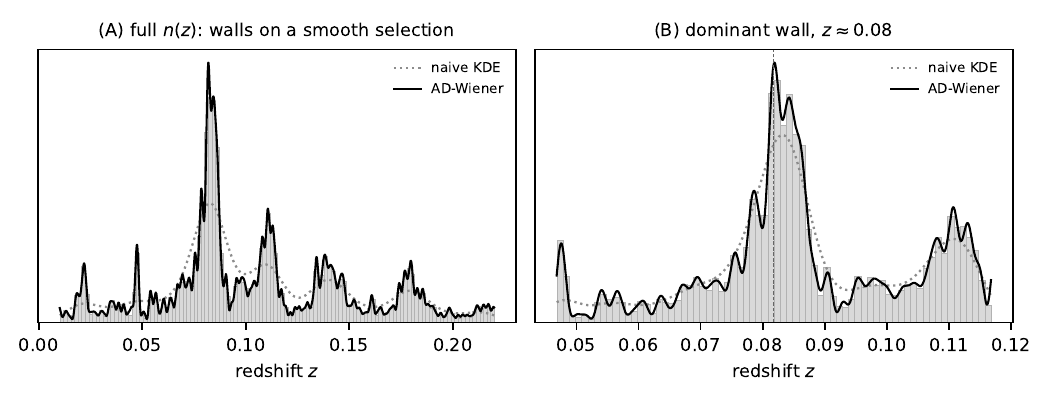}
\caption{SDSS DR18 galaxy-redshift density in a five-degree cone about
$(\alpha,\delta)=(200^\circ,0^\circ)$. (A) The full $n(z)$, a smooth selection envelope carrying
narrow walls of large-scale structure; the ordinary kernel estimate rounds the walls while the
AD-Wiener estimate resolves them. (B) The dominant wall near $z\approx0.08$, where the ordinary
estimate broadens the overdensity that the AD-Wiener estimate resolves.}
\label{fig:sdss}
\end{figure*}

\section{Real-Data Validation on a Verified Random-Beacon Stream}\label{sec:trng}
A fourth completed real-data study moves to a security setting and to a bounded-support target whose
true density is known exactly. The source is the NIST randomness beacon \cite{nistbeacon}, which
emits a 512-bit random pulse each minute; the study uses 140{,}000 consecutive pulses, a stream of
$71.68\times10^{6}$ bits. Before any estimation the stream is checked for corruption by the core
tests of the NIST statistical test suite \cite{sp80022}, run as in the \mbox{SMU-DDI} STEER framework
\cite{steer}: the bits are partitioned into sixty-eight sequences of $2^{20}$ bits, each test is run
on each sequence, and the proportion of sequences passing at significance $0.01$ is compared to the
acceptable interval, $0.99\pm3\sqrt{0.99\cdot0.01/68}$, lower bound $0.954$. Every test lands in the
interval (Table~\ref{tab:sts}), so the stream is taken to be uncorrupted.

\begin{table}[t]\centering
\caption{Core NIST SP~800-22 tests on the beacon stream, run on sixty-eight sequences of $2^{20}$
bits: number passing at significance $0.01$. Every proportion is within the acceptable interval
(lower bound $0.954$), so the stream is uncorrupted.}
\label{tab:sts}
\setlength{\tabcolsep}{5pt}
\begin{tabular}{l c l c}
\toprule
test & pass/68 & test & pass/68 \\
\midrule
Frequency (monobit) & 68 & Cumulative sums (fwd) & 68 \\
Block frequency     & 68 & Cumulative sums (bwd) & 68 \\
Runs                & 66 & Approximate entropy   & 68 \\
Longest run of ones & 65 & Serial (1)            & 67 \\
Spectral (DFT)      & 66 & Serial (2)            & 68 \\
\bottomrule
\end{tabular}
\end{table}

The verified bits are read as uniform deviates on $[0,1)$ by taking successive 32-bit words, so the
specified target is the uniform density on a bounded support. The uniform density is exactly the lowest
cyclic mode, which AD-Wiener represents at the constant term, so it estimates the target almost exactly
(Kullback-Leibler divergence $0.0004$, Jensen-Shannon $0.0001$ at $N=8000$). The Gaussian mixture cannot represent a
flat density with a few components and is an order of magnitude worse ($0.0143$ / $0.0048$), and the
ordinary kernel estimate carries the boundary bias of bounded support ($0.0069$ / $0.0023$). The
automatic mixed-mode rule of Section~\ref{sec:mixedmode} leaves the support entirely on AD-Wiener
here, since the mixture wins nowhere, which is the correct outcome for a structureless target and a
check that the rule does not invent boundaries where none belong.

An optional uniformity gate sharpens this case to an exact result. Under the uniform null each non-DC
squared empirical characteristic coefficient satisfies $n\,|\hat c_k|^{2}\sim\mathrm{Exp}(1)$, so its
maximum over the $K$ candidate modes concentrates near $\ln K$; the gate declares the stream uniform
when that maximum stays below $\ln(K/\alpha)$, an extreme-value threshold at level $\alpha$ whose
detectable ripple scales as $1/n$. When it fires, the AD-Wiener estimate is replaced by the exact
constant, here taking the Kullback-Leibler divergence to zero; applied to a single AD-Wiener
subsection it flattens that stretch alone, joined by \eqref{eq:pou} so the result stays differentiable
and the mass on the interval is preserved. The gate is off by default and is a deliberate prior toward
uniformity: it erases structure below the $1/n$ noise floor, so it suits randomness testing and
structureless baselining, where any departure from uniform is itself the signal, rather than general
estimation. On the beacon stream it fires and returns the exact uniform; on the structured targets of
the earlier sections it does not fire.

\section{Network-Traffic Density Estimation}\label{sec:cyber}
A fifth real-data domain is network security, where behavioral baselining models the density of
traffic features and the same multi-scale geometry recurs. The data are the labeled flows of the
UNSW-NB15 corpus \cite{unswnb15} of the Australian Centre for Cyber Security, in the standard
training and testing partition; the same approach applies to other corpora such as CIC-IDS2017
\cite{cicids2017}. Four continuous flow features are studied on the logarithmic scale: the packet
rate, the source load, the destination volume, and the mean destination packet size. Each is heavy
tailed with a concentrated benign mode and additional sharp modes contributed by attack traffic, a
port scan or a denial-of-service burst collapsing a feature onto a narrow band. The estimators are
scored by held-out negative log-likelihood, on a benign-only model and on a benign-plus-attack
mixture, with the mixture base the bundled Gaussian-mixture EM as everywhere else in this report.

Table~\ref{tab:cyber} reports the result and Fig.~\ref{fig:cyber} shows the mixture density of two
features. One provenance note applies to this table alone: its ISJ column predates the corrected
fixed-point implementation of the implementation note of Section~\ref{sec:exp} and awaits
regeneration against the local UNSW data; because the corrected bandwidth is larger, the corrected
column can only be less sharp on these multi-modal features, so the superposition's lead over it is
expected to persist or widen, but the column is not restated until rerun. The superposition attains
the lowest held-out negative log-likelihood on every feature and
both scenarios, ahead of the global Silverman bandwidth by a wide margin and ahead of AD-Wiener alone
and of the tabulated ISJ column; AD-Wiener alone improves on the rule of
thumb but not on the best plug-in here. The figure shows why: the benign-plus-attack density carries
several sharp modes from distinct traffic and attack types, which the global bandwidth blurs into
broad bumps and the adaptive estimator resolves. The advantage holds on benign-only traffic as well
as on the attack mixture, so it reflects the intrinsic multi-scale structure of the features rather
than the presence of attacks, the same situation as the particle-physics and galaxy-redshift spectra
of Sections~\ref{sec:cern} and~\ref{sec:sdss}.

Two qualifications are stated plainly. First, the gain is in density quality, not in detection:
scored as a one-class anomaly detector by benign-model density, the adaptive estimators do not improve
attack-detection area-under-curve over the global bandwidth, which is comparable or better; a sharper
density model is not by itself a better novelty detector, and the contribution here is the traffic
density model, not a detector. Second, traffic features are partly discrete and heaped, so the
heaping robustness of Section~\ref{sec:heaping} carries over, and part of the superposition advantage
is its resolution of this systematic structure. The mixture base throughout is the bundled
BIC Gaussian-mixture EM, not an external library.

\begin{table}[t]\centering
\caption{UNSW-NB15 held-out negative log-likelihood (lower is better) on four log-scaled traffic
features, on a benign-only model and a benign-plus-attack mixture; row minimum in bold. The mixture
base of the superposition is the bundled Gaussian-mixture EM. The ISJ column predates the corrected fixed-point implementation and awaits regeneration on the local data (see the text).}
\label{tab:cyber}
\setlength{\tabcolsep}{4pt}\small
\resizebox{\columnwidth}{!}{%
\begin{tabular}{ll rrrr}
\toprule
feature & scenario & Silverman & ISJ/Botev & AD-Wiener & superpose \\
\midrule
rate  & benign        & 2.199 & 1.687 & 1.931 & \textbf{1.613} \\
rate  & benign+attack & 2.323 & 1.305 & 2.163 & \textbf{1.219} \\
sload & benign        & 2.402 & 2.073 & 2.206 & \textbf{2.005} \\
sload & benign+attack & 2.558 & 2.047 & 2.388 & \textbf{1.940} \\
dload & benign        & 2.535 & 1.532 & 2.238 & \textbf{1.420} \\
dload & benign+attack & 2.558 & 0.571 & 2.394 & \textbf{0.485} \\
dmean & benign        & 1.146 & $-0.671$ & 0.981 & \textbf{$-1.372$} \\
dmean & benign+attack & 1.444 & $-1.477$ & 1.295 & \textbf{$-1.807$} \\
\bottomrule
\end{tabular}}
\end{table}

\begin{figure*}[t]\centering
\includegraphics[width=\textwidth]{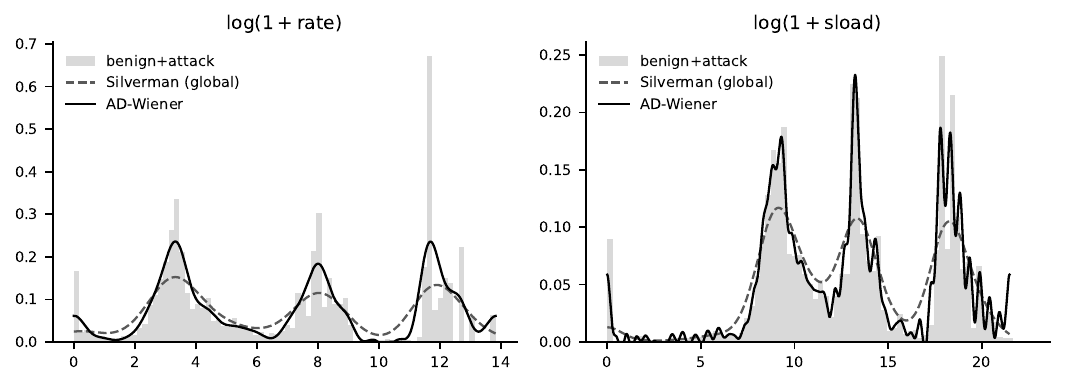}
\caption{UNSW-NB15 benign-plus-attack density of the packet rate and source load on the logarithmic
scale. The global Silverman bandwidth (dashed) blurs the several traffic modes into broad bumps; the
adaptive AD-Wiener estimator (solid) resolves them.}
\label{fig:cyber}
\end{figure*}

\section{Benchmark against Established Estimators}\label{sec:benchmark}
The preceding sections set the spectral estimators against ordinary kernel estimation and an adaptive
mixture on chosen targets. This section places them on the field's standard external yardstick, the
fifteen normal-mixture densities of Marron and Wand \cite{marronwand1992}, whose closed-form densities
make the integrated squared error exact rather than estimated. Seven estimators are compared: a
Silverman rule-of-thumb kernel; the improved Sheather-Jones plug-in bandwidth of Botev and colleagues
\cite{botev2010}, computed by the direct fixed-point implementation (see the implementation note of
Section~\ref{sec:exp}) and the strongest classical selector here; a true least-squares
cross-validation bandwidth (full sample at $n=100$ and $n=500$, selected on
a thousand-point subsample at $n=5000$ for tractability); an Abramson adaptive variable-bandwidth
kernel; a Gaussian mixture selected by BIC; the AD-Wiener estimator with the residue floor; and the
superposition of Section~\ref{sec:superpose} with the bundled Gaussian-mixture EM base, written
super~(GMM). Each density is drawn fifty times at each of three sample sizes, $n=100$, $500$, and
$5000$, and every estimator is scored by integrated squared error against the true density on a common
grid.

Tables~\ref{tab:bench100}, \ref{tab:bench500}, and \ref{tab:benchmark} give the per-density errors at
the three sample sizes and Fig.~\ref{fig:benchmark} the average ranks. The spectral advantage is real
but emerges with sample size rather than holding throughout, and this is the central finding of the
benchmark. At $n=5000$ the two spectral estimators, the superposition and the AD-Wiener filter, take
the top two average ranks, $2.17$ and $2.23$, ahead of every classical baseline including the
corrected improved Sheather-Jones bandwidth at $3.60$; the membership of that top two is stable under
either tie-ranking convention. At $n=500$ the superposition leads at
$2.83$, with the corrected Botev bandwidth second at $3.33$, the BIC mixture at $3.43$, and AD-Wiener
at $3.83$. At $n=100$ the order inverts, and more sharply than the first posted version suggested: the
corrected Botev plug-in is the best method in the field at $2.47$, ahead of least-squares
cross-validation at $3.60$ and AD-Wiener, the strongest spectral estimator there, at $3.87$, while the
superposition falls toward the bottom at $4.90$.

The crossover is a bias-variance tradeoff, and it is worth stating because it delineates where the
method helps and where it does not. The AD-Wiener estimator is, by construction, the
minimum-mean-squared-error linear filter on the empirical characteristic function: it keeps
high-frequency content wherever the data support it and discards it elsewhere, so it introduces little
bias and captures structure that a single bandwidth blurs. The price is variance. Estimating a
per-frequency filter from the data, rather than committing to one global bandwidth, adds estimation
variance, and at small samples that variance dominates. The methods that lead at $n=100$, the
corrected Botev plug-in and the cross-validation bandwidth, win precisely by committing to one global
bandwidth: a single well-chosen scalar carries far less estimation variance than a per-frequency
filter, and at a hundred points that is the right trade, because the problem is
variance-limited and there is not enough data to estimate the fine structure reliably, so committing to
it adds more variance than the bias it removes. As $n$ grows the variance penalty shrinks while the
low-bias advantage persists, the balance tips, and by $n=500$ the spectral estimators, which were
paying for structure they could not yet estimate, estimate it well and take the top ranks. The
superposition is the most sample-hungry of the three, because it must also fit a Gaussian-mixture base;
at $n=100$ the base cannot be estimated reliably and the superposition inherits its small-sample
variance rather than improving on it, while the band-limited residual has too little signal to separate
scales. AD-Wiener alone, which fits no mixture, is therefore the robust spectral choice at small
samples, and the superposition is the estimator to use once the sample is large enough to fit its base,
which on these densities is by $n=500$.

At the large sample the division of labor is the one the synthetic battery predicted, with one
correction in the classical field's favor. The BIC mixture leads or ties for the lead on the low-order densities whose true
form it matches exactly, the Gaussian, the bimodals, the outlier, and the double claw, because the
Marron-Wand densities are themselves normal mixtures and the mixture is fitting the correct model
class. AD-Wiener leads or ties for the lead on the strongly skewed, the kurtotic, the trimodal, and
the claw, and the corrected Botev bandwidth is genuinely strong on asymmetric multi-scale structure,
taking the two asymmetric claws outright. The superposition collects the mixture's and the band-limited estimator's
strengths, matching the better of its ingredients on nearly every density, and surpasses the BIC
mixture overall even though the benchmark hands the mixture the correct model class, because the
residual recovers the multi-scale structure the mixture omits.

Three limitations are stated plainly. The benchmark is a family of normal mixtures, which favors the
mixture base; a non-mixture stress test would isolate the spectral contribution further. The comparison
set, while it includes the strongest plug-in and cross-validation bandwidths and an adaptive kernel,
omits the log-spline and taut-string estimators, which are also strong on multi-scale densities and
remain to be added. And the superposition is not the estimator of choice at small samples: below
roughly $n=500$ on these densities the corrected Botev bandwidth or AD-Wiener alone is preferable,
since the mixture base needs data to fit. At moderate and large samples, within the set tested, the
superposition attains the best average rank among established estimators.

\begin{table*}[t]\centering
\caption{Integrated squared error ($\times10^{3}$) on the fifteen Marron-Wand densities at $n=100$, mean over fifty replications; lower is better, row minimum in bold. GMM-BIC is a Gaussian mixture selected by BIC; super~(GMM) is the superposition of Section~\ref{sec:superpose} with that mixture as its smooth base. Ties at the reported precision share the bold and the win count; average ranks use average-tie ranking.}
\label{tab:bench100}
\setlength{\tabcolsep}{5pt}\footnotesize
\begin{tabular}{l rrrrr rr}
\toprule
density & Silverman & ISJ/Botev & LSCV & Abramson & GMM-BIC & AD-Wiener & super (GMM) \\
\midrule
Gaussian            & 5.83 & 6.06 & 10.53 & 7.24 & \textbf{2.64} & 9.98 & \textbf{2.64} \\
Skewed unimodal     & \textbf{9.05} & 9.76 & 13.38 & 9.57 & 12.18 & 11.15 & 13.30 \\
Strongly skewed     & 142.2 & 53.21 & \textbf{48.64} & 112.5 & 87.07 & 53.24 & 53.45 \\
Kurtotic unimodal   & 101.7 & 55.95 & \textbf{52.40} & 58.83 & 79.23 & 55.80 & 56.22 \\
Outlier             & 61.34 & 66.06 & 85.01 & 71.00 & 45.32 & 56.71 & \textbf{31.90} \\
Bimodal             & 8.73 & 9.14 & 11.02 & \textbf{8.68} & 16.96 & 12.46 & 17.54 \\
Separated bimodal   & 46.74 & 12.87 & 14.94 & 40.26 & \textbf{9.12} & 16.15 & 16.34 \\
Skewed bimodal      & 12.57 & \textbf{12.54} & 12.99 & 12.75 & 22.89 & 13.61 & 21.21 \\
Trimodal            & 11.16 & \textbf{10.67} & 12.69 & 11.33 & 16.47 & 13.59 & 16.11 \\
Claw                & 53.17 & 49.00 & \textbf{42.47} & 54.59 & 57.88 & 45.19 & 55.26 \\
Double claw         & \textbf{10.21} & 10.51 & 12.10 & 10.35 & 20.10 & 13.56 & 18.70 \\
Asymmetric claw     & 27.81 & 27.26 & 28.18 & 28.39 & 33.50 & \textbf{26.30} & 31.43 \\
Asym.\ double claw  & 14.42 & 14.22 & 17.54 & \textbf{14.07} & 21.44 & 17.68 & 21.92 \\
Smooth comb         & 85.06 & 39.76 & \textbf{39.10} & 79.85 & 45.18 & 41.65 & 41.91 \\
Discrete comb       & 113.5 & \textbf{38.36} & 39.15 & 113.5 & 39.55 & 40.69 & 40.55 \\
\midrule
avg.\ rank      & 3.97 & \textbf{2.47} & 3.60 & 4.17 & 5.03 & 3.87 & 4.90 \\
wins (of 15)    & 2 & 3 & \textbf{4} & 2 & 2 & 1 & 2 \\

\bottomrule
\end{tabular}
\end{table*}

\begin{table*}[t]\centering
\caption{Integrated squared error ($\times10^{3}$) on the fifteen Marron-Wand densities at $n=500$, mean over fifty replications; lower is better, row minimum in bold. Columns as in Table~\ref{tab:bench100}. Ties at the reported precision share the bold and the win count; average ranks use average-tie ranking.}
\label{tab:bench500}
\setlength{\tabcolsep}{5pt}\footnotesize
\begin{tabular}{l rrrrr rr}
\toprule
density & Silverman & ISJ/Botev & LSCV & Abramson & GMM-BIC & AD-Wiener & super (GMM) \\
\midrule
Gaussian            & 1.86 & 1.88 & 2.54 & 2.24 & \textbf{0.48} & 2.37 & \textbf{0.48} \\
Skewed unimodal     & 2.85 & 2.96 & 4.14 & 2.60 & 2.33 & 2.74 & \textbf{1.86} \\
Strongly skewed     & 106.3 & 16.38 & \textbf{15.69} & 73.70 & 42.54 & 16.71 & 16.59 \\
Kurtotic unimodal   & 56.23 & 13.59 & 13.96 & 16.31 & \textbf{8.28} & 12.47 & 12.61 \\
Outlier             & 19.62 & 20.00 & 23.53 & 21.54 & 6.17 & 15.45 & \textbf{6.15} \\
Bimodal             & 3.01 & 2.52 & 3.16 & 2.19 & \textbf{1.61} & 3.16 & 1.84 \\
Separated bimodal   & 20.86 & 3.55 & 3.97 & 13.02 & \textbf{1.66} & 3.69 & 3.75 \\
Skewed bimodal      & 5.20 & 3.86 & 4.09 & 4.03 & 7.76 & \textbf{3.73} & 3.86 \\
Trimodal            & 4.81 & \textbf{3.36} & 3.62 & 3.71 & 5.42 & 3.96 & 5.67 \\
Claw                & 45.95 & 12.72 & \textbf{12.49} & 46.03 & 49.20 & 14.21 & 42.47 \\
Double claw         & 4.43 & 3.95 & 4.32 & 3.60 & \textbf{3.07} & 4.52 & 3.32 \\
Asymmetric claw     & 20.82 & 11.51 & \textbf{11.03} & 20.12 & 20.49 & 11.70 & 19.71 \\
Asym.\ double claw  & 7.87 & 6.94 & 7.28 & 6.76 & \textbf{6.08} & 7.42 & 6.23 \\
Smooth comb         & 63.54 & 17.54 & \textbf{16.40} & 58.54 & 20.36 & 17.43 & 17.44 \\
Discrete comb       & 88.28 & 18.96 & \textbf{16.06} & 83.11 & 22.44 & 16.43 & 16.20 \\
\midrule
avg.\ rank      & 5.87 & 3.33 & 3.97 & 4.73 & 3.43 & 3.83 & \textbf{2.83} \\
wins (of 15)    & 0 & 1 & 5 & 0 & \textbf{6} & 1 & 3 \\

\bottomrule
\end{tabular}
\end{table*}

\begin{table*}[t]\centering
\caption{Integrated squared error ($\times10^{3}$) on the fifteen Marron-Wand densities at $n=5000$, mean over fifty replications; lower is better, row minimum in bold. Columns as in Table~\ref{tab:bench100}. Ties at the reported precision share the bold and the win count; average ranks use average-tie ranking.}
\label{tab:benchmark}
\setlength{\tabcolsep}{5pt}\footnotesize
\begin{tabular}{l rrrrr rr}
\toprule
density & Silverman & ISJ/Botev & LSCV & Abramson & GMM-BIC & AD-Wiener & super (GMM) \\
\midrule
Gaussian            & 0.31 & 0.32 & 0.47 & 0.37 & \textbf{0.06} & 0.28 & \textbf{0.06} \\
Skewed unimodal     & 0.48 & 0.48 & 0.72 & 0.51 & 0.46 & 0.37 & \textbf{0.24} \\
Strongly skewed     & 58.90 & 2.59 & 3.96 & 30.74 & 9.42 & \textbf{2.32} & \textbf{2.32} \\
Kurtotic unimodal   & 18.41 & 2.25 & 3.04 & 1.88 & 2.12 & \textbf{1.63} & \textbf{1.63} \\
Outlier             & 2.89 & 2.91 & 4.17 & 3.28 & \textbf{0.62} & 1.55 & \textbf{0.62} \\
Bimodal             & 0.60 & 0.43 & 0.61 & 0.38 & \textbf{0.15} & 0.38 & 0.16 \\
Separated bimodal   & 5.07 & 0.59 & 0.83 & 1.49 & \textbf{0.15} & 0.46 & \textbf{0.15} \\
Skewed bimodal      & 1.18 & 0.56 & 0.82 & 0.49 & 1.29 & 0.44 & \textbf{0.18} \\
Trimodal            & 1.35 & 0.57 & 0.89 & 0.75 & 1.28 & \textbf{0.53} & 1.29 \\
Claw                & 32.93 & 1.98 & 2.69 & 26.42 & 2.35 & \textbf{1.50} & \textbf{1.50} \\
Double claw         & 2.09 & 1.85 & 2.06 & 1.85 & \textbf{1.66} & 1.85 & \textbf{1.66} \\
Asymmetric claw     & 12.51 & \textbf{2.51} & 3.49 & 10.73 & 7.76 & 2.57 & 6.17 \\
Asym.\ double claw  & 4.74 & \textbf{2.01} & 2.77 & 4.23 & 4.58 & 2.32 & 4.59 \\
Smooth comb         & 40.79 & 3.89 & 7.57 & 37.55 & 11.24 & 3.80 & \textbf{3.78} \\
Discrete comb       & 47.65 & 2.74 & 7.46 & 39.19 & 14.92 & 2.43 & \textbf{2.42} \\
\midrule
avg.\ rank      & 6.27 & 3.60 & 5.13 & 4.93 & 3.67 & 2.23 & \textbf{2.17} \\
wins (of 15)    & 0 & 2 & 0 & 0 & 5 & 4 & \textbf{11} \\

\bottomrule
\end{tabular}
\end{table*}

\begin{figure}[t]\centering
\includegraphics[width=0.96\columnwidth]{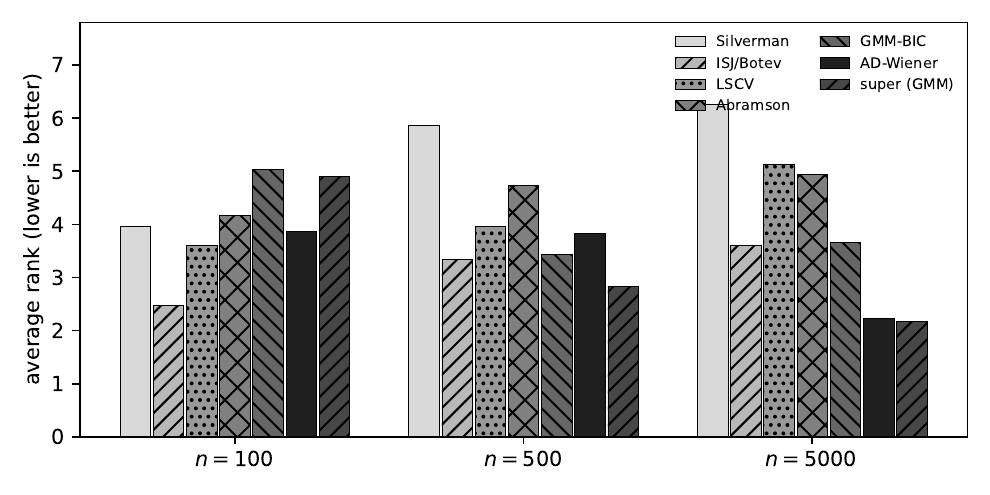}
\caption{Average rank over the fifteen Marron-Wand densities, lower is better, at $n=100$, $500$, and
$5000$. The spectral advantage emerges with sample size: at $n=5000$ AD-Wiener and the superposition
take the top two ranks ahead of every classical baseline, while at $n=100$ the corrected Botev plug-in
leads the field and the superposition trails.}
\label{fig:benchmark}
\end{figure}

\section{Software, Reproducibility and Data Availability}\label{sec:software}
The estimators and experiments are released as a reproducible bundle, available at
\url{https://github.com/mitch-thornton/kde-ad-wiener}. The mixture and group
extensions are organized behind a small plugin interface with three hooks: a candidate group
library, a blind group-matching backend, and an adaptive mixture estimator. Each ships a
self-contained default, the cyclic group, the cyclic matched group, and a Gaussian-mixture
estimator with model-selected order, so every figure and table reproduces with no external
dependency. Every mixture result reported here uses this bundled BIC Gaussian-mixture EM. A
practitioner may register a richer group library, a blind group-matching engine, or an alternative
mixture estimator in place of the defaults. The real-data study of Section~\ref{sec:crsp} is likewise scripted end to end: an extraction
step writes the CRSP series into a \texttt{data/} directory and an analysis step writes the
risk, Value-at-Risk, and tail-risk results into a \texttt{results/} directory as JSON, separating
the licensed-data step from the dependency-free analysis. The heaping study of
Section~\ref{sec:heaping} is scripted the same way against the public NHANES files, which the
analysis step reads directly to regenerate both figures. The particle-physics and galaxy-redshift
studies of Sections~\ref{sec:cern} and~\ref{sec:sdss} are scripted likewise, each reading a public
released or exported CSV directly to regenerate its figure and held-out table.

The CRSP daily stock data of Section~\ref{sec:crsp} are proprietary and available to subscribers
through Wharton Research Data Services \cite{crsp}; the extraction script and the derived per-series
results that reproduce the figures are included in the bundle. The NHANES 2017--2018 files of
Section~\ref{sec:heaping} are public \cite{nhanes2018}; the analysis script regenerates both figures
from them. The CMS dimuon data of Section~\ref{sec:cern} are public (CERN Open Data, CC0)
\cite{cms_dimuon2017}; the analysis script reads the released CSV directly. The SDSS DR18 galaxy
redshifts of Section~\ref{sec:sdss} are public \cite{sdss_dr18}, retrieved through the SkyServer SQL
interface \cite{sdss_skyserver}; the analysis script reads the exported CSV directly.


\section{Conclusion}\label{sec:conc}
Kernel density estimation is recast by reading bandwidth selection as a
spectral-support problem in the characteristic-function domain: the binned data's
group-averaged spectrum is the squared empirical characteristic function, and the bandwidth
is the cutoff where it meets the $1/n$ floor. The resulting selector matches the rule of
thumb on smooth densities and approaches the best fixed bandwidth on hard ones. Going beyond
a fixed kernel, the per-frequency optimal taper gives an adaptive estimator that surpasses
the best fixed bandwidth on most standard densities, including sharply peaked and comb-like
cases on which fixed-bandwidth methods fail, and it extends in the same domain to deconvolution under
known measurement error. Because that estimator resolves sharp structure but does not fit a smooth base
as economically as a mixture, a Gaussian mixture is combined with it as a piecewise partition and as a
superposition of a smooth base and a band-limited residual, the superposition being made the default.
On the standard Marron-Wand benchmark, scored by exact integrated squared error across three sample
sizes, the advantage emerges with sample size through a bias-variance tradeoff: the spectral estimators
carry low bias but pay in variance, so a corrected implementation of the Botev plug-in bandwidth leads
at $n=100$ and AD-Wiener alone
is the robust choice, while at $n=5000$ the Wiener filter and the superposition take the top two average
ranks ahead of every classical baseline including the corrected Botev diffusion bandwidth, the
superposition surpassing even a Gaussian mixture handed the correct model class. The same spectrum supplies a data-driven
floor robust to heaped data, and the estimators are validated on six real datasets, recovering tail risk
a Gaussian fit understates on CRSP returns, staying robust to rounding on NHANES self-reports, improving
held-out likelihood on the multi-scale CMS dimuon and SDSS galaxy-redshift spectra, recovering the
uniform target of a verified NIST randomness-beacon stream almost exactly, and giving the lowest
held-out likelihood on UNSW-NB15 network-traffic features, with a synthetic-data quality-checking
use-case recovering contrived target densities. All claims are validated numerically and the experiments
are reproducible.


\bibliographystyle{IEEEtran}
\bibliography{refs}

\end{document}